\newtheorem{theorem}{Theorem}[section]
\newtheorem{proposition}[theorem]{Proposition}
\newcommand{\E}{\mathbb{E}}
\newcommand{\R}{\mathbb{R}}
\renewcommand{\P}{\mathbb{P}}
\newcommand{\set}[1]{\{#1\}}
\newcommand{\cg}[1]{\textcolor{black}{#1}}
\newcommand{\nrev}[1]{\textcolor{black}{#1}}
\title{Addressing Strategic Manipulation Disparities in Fair Classification}
\author{Vijay Keswani}
\author{L. Elisa Celis}
\affil{Yale University}
\date{}
\begin{document}

\maketitle
\begin{abstract}
In real-world classification settings, such as loan application evaluation or content moderation on online platforms, individuals respond to classifier predictions by strategically updating their features to increase their likelihood of receiving a particular (positive) decision (at a certain cost).
Yet, when different demographic groups have different feature distributions or pay different update costs,
prior work has shown that individuals from minority groups often pay a higher cost to update their features.
Fair classification aims to address such classifier performance disparities by constraining the classifiers to satisfy statistical fairness properties.
However, we show that standard fairness constraints do not guarantee that the constrained classifier reduces the disparity in strategic manipulation cost.
To address such biases in strategic settings and provide equal opportunities for strategic manipulation, we propose a constrained optimization framework that constructs classifiers that lower the strategic manipulation cost for minority groups.
We develop our framework by studying theoretical connections between group-specific strategic cost disparity and standard selection rate fairness metrics (e.g., statistical rate and true positive rate).
Empirically, we show the efficacy of this approach over multiple real-world datasets.

\end{abstract}

\section{Introduction}
In prediction/classification settings, the goal is to develop automated models that accurately predict class labels 
using the available demographic and task-specific features of individuals.
The use of predictive models in many real-world applications, however, impacts the features of the underlying population. One direct way this happens is when individuals take steps to update their features to potentially obtain a different prediction in the future.
In binary classification, where positive class labels can denote \textit{success} for a given task, individuals who have been negatively classified will attempt to update their features in a manner that increases their likelihood of receiving a positive decision in the future.

There are numerous examples of such individual behavior in response to institutional decisions.
Consider the setting of loan applications, where the features are individuals' demographics, annual income, credit history, number of dependents, the number of open credit lines, etc.
The class label to be predicted is whether an individual will default on a loan or not.
The number of open credit lines is a feature that is often positively correlated with the class label and individuals can increase their likelihood of positive loan application (or increase their \textit{credit score}) by opening more credit lines (for the sake of simplicity, assume all the other features are unchanged; in real-world scenarios, there will be simultaneous dependence on other variables as well here, e.g. whether the individual has been regular with their payments or not).
However, opening credit lines requires additional investment on the part of the individuals \cite{citron2014scored}.
Another example is social media websites and online platforms. Even without a complete understanding of a platform's recommendation system, users nevertheless attempt to intervene in different ways to exercise control over the platform's algorithms \cite{van2018networks}.
For example, content moderation tools used in social media platforms flag objectionable posts, which are then suppressed by the recommendation system to ensure low visibility \cite{gillespie2022not} (often unfairly targeting minority voices \cite{haimson2021disproportionate,zeng2022content}).
Users, in this case, curate and modify their content to work around the platform's decision
\cite{burrell2019users,van2018networks}.
Beyond content moderation, strategic manipulation can allow users to avoid harassment, as seen in the case of Twitter \cite{burrell2019users}.
Evidence of users' attempts to exercise control over an online platform's algorithms has similarly been 
observed in ride-hailing apps like Uber and Lyft \cite{mohlmann2017hands}.
A final example is the setting of college admissions, where the features are individuals' demographics, school academic records, extra-curricular records, and scores from standardized tests like GRE.
The class label to be predicted is the likelihood of academic ``success'' to determine college admissions.
In this case, while higher scores for standardized tests increase the chances of a successful college application,
students can take these tests multiple times and submit only the highest scores.
Nevertheless, there is an additional investment required as every additional test attempt involves monetary and time expenses \cite{vigdor2003retaking}.
\cg{Feature manipulations of these kinds can also take the form of positive steps taken by individuals to improve their features (e.g., investing additional time in test preparation)
\cite{kleinberg2020classifiers,alon2020multiagent}
and/or provide individuals with the agency to address model decisions \cite{ustun2019actionable,venkatasubramanian2020philosophical}.
}

The above-described process involves two main players: the institution constructing a classifier and the individuals reacting to the classifier.
While the institution's goal is to minimize prediction error (or maximize a certain measure of utility), individuals react to the classifier predictions by \textit{strategically manipulating} their features to achieve a positive classification.
In these {strategic settings}, often due to historical biases,
the classifier employed by the institution 
can pose relatively higher costs for strategic manipulation (i.e., increased costs to improve their feature values) for individuals from minority groups (e.g., race and gender minorities).
Prior work has observed such disparities in settings where the datasets used for training the classifier encode social biases or when minority groups pay larger costs to update their features
\cite{milli2019social, hu2019disparate}.
For example, in the case of loan applications, historical discrimination against African Americans in financial aspects often deters them from seeking new credit lines
% to support/improve their credit scores 
\cite{cnbc2019}.
In the case of social media platforms, content moderation tools exhibit bias against minority groups, for example, by reducing the visibility of posts by advocates from minority groups \cite{tiktok2020, salty2020, haimson2021disproportionate} or by using biased sentiment analysis tools
\cite{kiritchenko2018examining,diaz2018addressing}; these biases lead to greater hurdles for these groups to make their voices heard.
\nrev{Similarly, for graduate school admissions, \citet{wilson2020predicting} 
revealed limitations of GRE and UGPA scores in predicting graduate school success for Black students. Using these scores without considering the racial disparities can create a higher admission barrier for Black students.
}
These biases are a result of negative stereotypes and/or historical lack of opportunities for minority groups and classifiers that inherit such biases can further propagate them.
{In the presence of these biases
in the predictions of trained classifiers, one can ask whether an institution can construct classifiers that provide equal opportunities for strategic manipulation to all groups and address the systemic disparities in investment required to improve their outcomes.
}
Strategic manipulation opportunities often serve as mechanisms to provide individuals with \textit{recourse} or \textit{agency} against biased institutional decisions \cite{venkatasubramanian2020philosophical}.
\nrev{As such, equalizing manipulation opportunities will ensure that majority groups do not solely take advantage of effective strategic manipulations and provide similar power to minority groups to address classifier decisions.}

Fairness-constrained classification attempts to address such disparities in classifier performance by constraining the classifier to satisfy certain statistical fairness properties.
For example, when constraining with respect to \textit{statistical rate}, the classifiers are constrained to have an almost-equal selection rate for all groups \cite{zafar2017fairness, dwork2012fairness, celis2019classification, rezaei2020fairness,agarwal2018reductions}.
Similarly, when constraining with respect to \textit{equalized odds}, the classifiers are constrained to have equal false positive and true positive rates for all groups \cite{Hardt2016EqualityOO, celis2019classification, rezaei2020fairness}.
However, these fairness metrics and constraints operate in a \textit{static} manner and do not take into account the response of the individuals to the classifier predictions or the disparity in costs that different groups pay for updating their features.
Even though fairness constraints encourage the increased selection of minority group individuals, existing dataset biases or update cost disparities can still disproportionately affect the negatively-classified individuals in the minority group.
\textit{Correspondingly, the primary question we investigate is the following: Do constraints that use standard static fairness metrics lead to classifiers that reduce strategic manipulation costs for minority groups?
}

\vspace{0.5em}
\noindent
\textbf{Our Contributions.}
We first theoretically study the relationship between standard selection rate fairness metrics (like statistical rate and true positive rate disparity) and the disparity in strategic manipulation costs between majority and minority groups when only one-dimensional features and group membership of individuals are provided (Section~\ref{sec:one_dim}).
Our analysis shows that threshold-based classifiers that have an equal selection rate for all groups can still have higher strategic manipulation costs for the disadvantaged groups when
feature distributions or cost functions differ across groups.
(Theorem~\ref{thm:main}, \ref{thm:main_diff_costs}).
Prior works on strategic cost disparities only demonstrated that this disparity can be large in unconstrained settings
\cite{milli2019social, hu2019disparate}.
Our analysis demonstrates that even ``fair'' classifiers that are constrained using selection rate fairness metrics can still have large strategic cost disparities. 
To address this bias, we bound the strategic cost disparity using the statistical properties of the classifier and the cost function.
Using these bounds as constraints, we construct classifiers that have both low selection rate disparity and low strategic cost disparity.
We also extend the results to multi-dimensional settings when the classifier is linear and the cost function is linear or quadratic (Section~\ref{sec:multi_dim}).
\nrev{The primary technical challenge we face in proving our results is accounting for all factors that result in strategic cost disparity. As we discuss in Section~\ref{sec:one_dim}, this disparity can arise due to multiple reasons, such as unequal group selection rates, variation in cost functions across groups, and ``distance'' of negatively classified individuals from classifier thresholds. Correspondingly, our results quantify the relationship between  the strategic cost incurred by each group and the group’s selection rate using all relevant factors, including bounds on the cost function gradient and other related empirical properties of the classifier. 
Using these bounds, we can construct appropriate classifiers that minimize strategic cost disparity.
}
Our theoretical results are complemented by empirical analysis on two real-world financial datasets: the FICO credit dataset \cite{Hardt2016EqualityOO} and the Adult income dataset \cite{ding2021retiring} (Section~\ref{sec:empirical}).
For both datasets, we show that fair classification with our proposed constraints leads to lower manipulation costs for the minority group.

\vspace{0.5em}
\noindent
\textbf{Related Work.}
Studies by \citet{milli2019social} and \citet{hu2019disparate} first analyzed strategic manipulation cost disparities when feature distributions or cost functions are biased against minority groups.
However, their analysis is limited to classifiers that optimize institution utility; in contrast, we also study classifiers that optimize utility subject to standard fairness constraints.
\citet{estornell2021unfairness} and \citet{braverman2020role}, on the other hand, assess classifier fairness in strategic settings using only selection rate fairness metrics.
\citet{estornell2021unfairness} observed that statistical parity or equalized odds constrained classifiers become less ``fair'' (with respect to the same metrics) than unconstrained classifiers due to 
strategic manipulations.
\citet{braverman2020role} study the impact of randomness on classifiers trained in strategic settings and
propose the use of \textit{noisy features} to address selection rate disparities in the outputs of these classifiers.
Like our work, both these papers evaluate the impact of fair classification in strategic settings; however, they analyze the fairness of final individual outcomes using only selection rate metrics and do not consider the costs disparity across groups.
Similar to strategic updates, \citet{ustun2019actionable} consider the notion of \textit{actionable recourse} and provide tools to minimize recourse cost for linear classifiers. However, their work does not aim to address recourse cost disparities.
\citet{gupta2019equalizing,von2022fairness} extend this line of work to study recourse disparities in classification; however, their models only handle settings where cost functions are the same for all groups.
As noted in multiple prior studies \cite{cohen2011credit, venkatasubramanian2020philosophical}, minority group individuals often pay larger costs to update their features, which then leads to
recourse disparities.
Our framework is, hence, more generic (than \cite{gupta2019equalizing,von2022fairness}) as it tackles both cost and feature disparities.

Static fairness constraints in non-strategic settings, that compare the selection rate of majority and minority groups, have been extensively studied in the context of constructing fair classifiers
\cite{kamishima2012fairness, dwork2012fairness, zafar2017fairness, kusner2017counterfactual, celis2019classification, zhang2018mitigating, agarwal2018reductions, rezaei2020fairness, zhang2019faht}.
For non-strategic settings, \citet{hu2020fair} show that selection rate-constrained classifiers may not improve the average quality of predictions received by the disadvantaged groups.
We extend this direction to analyze the impact of fair classification in strategic settings.
Recent work on strategic settings has also studied classifiers that are robust to strategic updates \cite{hardt2016strategic,kleinberg2020classifiers, chen2020learning,dong2018strategic,jagadeesan2021alternative, haupt2023recommending}. 
The analysis in these papers is primarily 
from the viewpoint of an institution maximizing its utility given information about individuals' behavior and these papers do not consider the fairness goal of reducing manipulation costs disparities with respect to protected attributes.
Our paper, instead, considers the individuals' perspective and addresses the cost disparities arising from group memberships.
While we look at the one-step feedback models,
\textit{performative prediction} algorithms model multi-step feedback settings
to construct classifiers that are stable over induced distributions 
\cite{perdomo2020performative, miller2021outside}. 
For ease of analysis, we limit our study to one-step feedback settings.
\section{Model formalization} \label{sec:model}

Let $x \in \mathcal{X} \subseteq \R^d$ denote the features of an individual in the population, $y \in \set{0,1}$ denote the true class label to be predicted and $z \in \mathcal{Z}$ denote the protected attribute (assumed to be binary for our current analysis).
We will use $\mathcal{D}$ to denote the underlying joint distribution of features, class labels, and protected attributes, and let $X, Y, Z$ denote the respective random variables.
We will work with threshold-based classifiers $f: \mathcal{X} \rightarrow \set{0,1}$ which set a threshold on the likelihood of any point achieving a positive class label 
\footnote{\cg{Any hypothesis class where the classifier output is a distribution over the labels (e.g., logistic regression, Naive Bayes and MLPs) can be represented using threshold-based classifiers and correspondingly used in our framework.} 
}.

\vspace{0.5em}
\noindent
\textbf{Strategic manipulations and individual cost functions.}
As mentioned earlier, individuals can update or manipulate their features at a certain cost after observing a classifier prediction.
Let $c : \mathcal{X}{\times}\mathcal{X} \rightarrow \R$ denote the cost function such that $c(x,x')$ is the cost paid by an individual to update their feature from $x$ to $x'$.
The subsequent utility gained by the individual from this update can be quantified as $u_x(f, x') := f(x'){-}c(x,x')$.
In this setting, the optimal feature update for an individual (in response to a classifier $f$) is captured by $\Delta_f(x) := \arg \max_{x'} u_x(f,x').$
Since we only aim to model updates that lead to improved classifier prediction, we will study cost functions that have 
no feature update cost if the individual is already positively classified.
In other words, individuals are \textit{rational} and aim to maximize their utility (this assumption is consistent with prior work on strategic settings \cite{hardt2016strategic,hu2019disparate}).

The institution's aim, in the unconstrained setting, is to minimize error w.r.t. a given loss function $\mathcal{L}$, i.e., find the classifier $f$ that minimizes $\E_{\mathcal{D}} [\mathcal{L}(f; X,Y)]$.
\nrev{When using unmanipulated data, this loss will be a proxy measure for  $\P_{\mathcal{D}}[f(X) = Y]$, while for manipulated data, this loss will be a surrogate for $\P_{\mathcal{D}}[f(\Delta_f(X)) = Y]$ (i.e., the standard accuracy measure in strategic classification
}\cite{hardt2016strategic,miller2021outside,levanon2021strategic}).
\nrev{Any common classification loss function can be used for $\mathcal{L}$; e.g., we use the log-loss function in some of our simulations. Other common loss functions, such as mean square loss, hinge loss, or regularized versions of these functions, can also be used with our framework.}
However, to incorporate fairness in this optimization program, we need additional fairness constraints.

\vspace{0.5em}
\noindent
\textbf{Ground truth label $y$.} 
Feature manipulations represent actions or changes that are under individual-level control, such that positively manipulating the relevant features can potentially lead to a change in the classifier decision for this individual. 
In a variety of real-world settings, individual actions to update features $x$ can either change their ground truth label $y$ or not affect their ground truth label depending on the nature of the update and the context.
In all cases, it is important to ensure that equal opportunities for manipulations are available across demographic groups.
However, in this paper, we primarily consider the settings where strategic manipulations are used as a recourse option to address unfair institutional decisions.
While our model and theoretical analysis can handle both settings (i.e., when manipulations change ground truth and when they don't affect ground truth), our empirical analysis will primarily focus on cases where ground truth remains unchanged due to strategic updates, as these updates capture recourse strategies.
This assumption is also consistent with other works on 
strategic or adversarial manipulations \cite{hardt2016strategic, estornell2021unfairness, goodfellow2018making}.

To see why it is important to study strategic manipulations as a recourse option we present a few examples where feature updates do not lead to a change in ground truth label $y$ but still provide valuable agency to individuals.

\noindent
\textit{-- Example (1):} On online platforms, costs associated with individuals' actions can be seen to depend on a variety of factors. Many studies have reported that Black activists face higher levels of censorship on social media platforms simply due to mentions of race-associated terms \cite{haimson2021disproportionate}. To counter this, such activists have to manipulate their posts (e.g., by changing certain words or using screenshots) to get around the automated moderation tools, paying a cost in terms of time and resources required for such manipulation. Note that, these actions do not change the ground truth label $y$ of the post (i.e., the post continues to remain non-offensive), yet the censored individuals have to take action and pay associated costs so that the automated system aligns with their ground truth label.

\noindent
\textit{-- Example (2):} In a lending situation, an individual’s credit score is an important factor when evaluating their loan application. However, many studies have shown that changes in credit scores are related to factors beyond an individual’s financial credibility. 
Take the following example from a CNBC article\footnote{\url{https://www.cnbc.com/select/paying-off-credit-card-debt-boosts-credit-score/}}: two people with equal annual income, equal credit card debt, and equal credit limit got the \$1200 stimulus check. The first one used the entire amount to pay off the credit card debt while the second one used \$600 towards paying off their debt and used \$600 for their savings. However, due to different \textit{credit utilization} rates, the first person’s credit score will be higher than the second person’s score. 
Both individuals in this case have the same income/resources and, if they both applied for a loan, they arguably would have a similar likelihood to pay back the loan (implying no significant changes in ground truth for default risk $y$). Yet, because the first person has a higher credit score, any decision-making policy that uses credit scores would prefer the first person for the loan. Hence, individual actions affect classifier decisions, sometimes independent of ground truth.

The above examples are settings where individuals can use strategic manipulations as a recourse option when they believe that the institution's decision is not correct.
These examples also make it clear that addressing incorrect decisions can require monetary or time investments by individuals.
In these examples and numerous other settings explored in prior work \cite{tiktok2020, diaz2018addressing, salty2020, wilson2020predicting}, strategic manipulations are used by individuals to exercise control over institutional decisions. 
We primarily analyze strategic manipulations in these recourse contexts 
because our analysis considers the perspective of the individuals and the agency provided to them to address unfair institutional decisions.
The institution can use a classifier that either simply maximizes their utility/accuracy or they can use one that is fair with respect to standard selection rate fairness metrics.
We evaluate the impact of such classifiers on individuals from different groups and the average cost they have to pay to positively manipulate their features based on their group membership. 

\vspace{0.5em}
\noindent
\textbf{Selection rate fairness metrics.}
The protected attribute $z \in \mathcal{Z}$
is the focus of our analysis of fairness.
Standard fair classification algorithms measure fairness using group-specific selection rates, either over the entire population or over certain subpopulations
\cite{celis2019classification, mitchell2021algorithmic}.
For any sub-population condition $\psi : \mathcal{X} \times \mathcal{Y} \rightarrow \{0,1\}$, the (conditional) selection rate of a classifier $f$ with respect to protected attribute group $z$ can be defined as $H_z(f, \psi) := \P_{\mathcal{D}}[f(X) = 1 \mid \psi(X,Y) = 1, Z=z]$.
With respect to this definition, the conditional selection rate fairness of $f$ can be quantified as 
$$H(f, \psi) := H_0(f, \psi) - H_1(f, \psi).$$
%$$H(f, \psi) :=\P[f(X) = 1 \mid \psi(X,Y) = 1, Z=0] - \P[f(X) = 1 \mid \psi(X,Y) = 1, Z=1].$$

\noindent
If the condition is identity, i.e. $\psi(x, y) = 1$, then $H_z(f, \psi)$  simply measures the fraction of elements in group $z$ that are positively classified and $H(f, \psi)$ in this case is the standard statistical rate metric
\cite{dwork2012fairness}.
If the condition is $\psi(x, y) = \mathbf{1}(y{=}1)$, then $H_z(f, \psi)$ measures the true positive rate for group $z$, and $H(f, \psi)$ is the true positive rate disparity across the protected attribute groups
\cite{zafar2017fairness, Hardt2016EqualityOO}.
Using the above definition of $H$, all standard \textit{linear fairness metrics} considered in \citet{celis2019classification} can be represented in additive form.
\cg{When clear from context, we will use shorthand $\psi$ to denote $\psi(\cdot,\cdot)$.}

\vspace{0.5em}
\noindent
\textbf{Strategic cost disparity.}
The power of strategic manipulation can be different for different demographic groups,
which is the primary kind of bias we tackle in this paper.
\nrev{
As mentioned earlier, these biases can occur when the underlying distributions vary across groups
due to possibly different historical evolution trajectories followed by group-specific distributions  \cite{cnbc2019}, or when one group pays larger update costs than others for similar updates \cite{board2007report}.
}

For a classifier $f$, the expected cost incurred by individuals from group $Z=z$ can be measured using 
$ \E_{\mathcal{D}}[c(X, \Delta_f(X)) \mid Z=z]$, 
a quantity referred to as the \textit{social burden} for the group $z$ by  \citet{milli2019social}.
Therefore, one measure of fairness we can look at in this strategic setting is the following gap:
$\E[c(X, \Delta_f(X)) \mid Z = 0] - \E[c(X, \Delta_f(X)) \mid Z = 1]$.
Higher values ($> 0$) of this quantity imply that individuals from group 0, on average, pay a larger cost to strategically manipulate their features than individuals from group 1.
While the above measure evaluates the cost for all individuals in each group,
different contexts might require focusing on different sub-populations of individuals from each group.
E.g., in the recidivism risk assessment setting \cite{washington2018argue}, we may want to analyze the average cost paid by a low-risk individual from the minority group who has been deemed high-risk to overturn the classifier decision. In this case, the expected cost $\E[c(X, \Delta_f(X)) \mid Y = 1, Z = z]$ is more relevant ($Y = 1$ denotes low-risk). 
Hence, in general, for any classifier $f$, we can define the social burden for any group $z$ with respect a given sub-population condition 
$\psi : \mathcal{X} \times \mathcal{Y} \rightarrow \set{0,1}$ as $G_z(f, \psi) := \E[c(X, \Delta_f(X)) \mid \psi(X,Y) = 1, Z = 0]$ and, correspondingly, define the \textit{social burden gap} as
\begin{align*}
G(f, \psi) := G_0(f, \psi) - G_1(f, \psi) .
%G(f, \psi) :=\E[c(X, \Delta_f(X)) \mid \psi(X,Y) = 1, Z = 0] - \E[c(X, \Delta_f(X)) \mid \psi(X,Y) =1, Z=1].
\end{align*}

\noindent
\nrev{Classifiers that equalize manipulation costs ($G(f, \cdot)=0$) ensure that all groups have similar manipulation power.
In certain cases, we might even require $G(f, \cdot)$ less than 0 to counter historical inequalities faced by disadvantaged groups.
Hence, our goal is to provide an optimization framework to construct classifiers with a desired social burden gap.
}

\section{{Linking selection-rate fairness  and  social burden gap in one-dimensional setting}} \label{sec:one_dim}

We first look at the case when the features are one-dimensional and positive, i.e., $\mathcal{X} = \R_{\geq 0}$.
This setting models several real-world scenarios such as the use of credit scores for loan applications or exam scores for school admissions. 
Furthermore, when the likelihood of positive classification ($\P[Y{=}1 \mid X{=}x]$) can be computed (even approximately), one can use the likelihood as the feature for classification (similar to the model of \citet{milli2019social}).
Secondly, we will assume outcome monotonicity of the cost function with respect to the feature:
if $\tau{>}x_1{>}x_2$, then $c(x_2, \tau){>}c(x_1, \tau)$. 
In this case, threshold-based classifiers will 
classify all individuals with feature values greater than a specific threshold as positive and all individuals with feature values less than the threshold as negative.
As mentioned before, we study cost functions that only have non-zero costs for the individuals classified as negative. Hence, we assume that the cost function $c$ has the following property: $c(x_1, x_2)$ is non-zero (and positive) only when $x_1{<}x_2$; i.e., for a continuous and differentiable function $d:  \mathcal{X}{\times}\mathcal{X}{\rightarrow}\R$, we can say that $c(x_1, x_2) = d(x_1, x_2) \cdot \mathbf{1}(x_2{>}x_1)$.
Due to outcome monotonicity, the gradient of $c(x_1, x_2)$ with respect to $x_1$ will be negative.
We note that these assumptions are similar to those considered in \cite{milli2019social, hu2019disparate}.

Prior work has shown that two kinds of biases can lead to manipulation cost disparities: \textit{feature biases} and \textit{cost function biases}.
For the first part of the analysis, we focus on \textit{feature biases} and we analyze the impact of \textit{cost function biases} later in this section.
Feature biases refer to settings where disadvantaged group individuals have scores concentrated in sub-spaces that have a lower likelihood of positive classification; e.g., credit score datasets exhibit these biases for African-Americans \cite{Hardt2016EqualityOO}.
They can be formally defined as follows.
For a sub-population condition $\psi$, there is feature bias against group $Z=0$ in distribution $\mathcal{D}$ if, for all $x \in \mathcal{X}, z \in \set{0,1}$ s.t. $\Pr[X{<}x \mid Z=z, \psi(X,Y){=}1] \in (0,1)$, we have that 
\begin{equation} \label{def:feature_bias}
\P_\mathcal{D}[X < x \mid Z = 0, \psi(X,Y) = 1] > \P_\mathcal{D}[X < x \mid Z = 1, \psi(X,Y) = 1].
\end{equation}

\noindent
With respect to feature biases, we restate the result of \citet{milli2019social} below, generalizing it to cases when cost analysis may be limited to a certain sub-population defined by condition $\psi$.

\begin{proposition} \label{lem:positive_social_gap}
Suppose we have a sub-population condition $\psi$ and a cost function $c(\cdot, \cdot)$.
For a classifier $f_\tau$, characterized by a single threshold $\tau$, 
if there is feature bias against group $0$
as defined in \eqref{def:feature_bias}
then for all $\tau \in \mathcal{X}$, $G(f_{\tau}, \psi) > 0$.
\end{proposition}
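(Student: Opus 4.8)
\emph{Proof strategy.} The plan is to reduce the statement to a first-order stochastic dominance comparison. First I would pin down the cost actually incurred under the optimal response $\Delta_{f_\tau}$. Since $f_\tau$ labels $x$ positive exactly when $x \ge \tau$, any individual with $x \ge \tau$ is already accepted and pays nothing, so $c(X, \Delta_{f_\tau}(X)) = 0$ there. For $x < \tau$, outcome monotonicity (the cost to reach a fixed target is decreasing in the starting point, and moving past $\tau$ only adds cost) implies that the cheapest way to flip the label is to move exactly to $\tau$; hence the relevant per-individual cost is captured by the function $g_\tau(x) := c(x, \tau)\,\mathbf{1}(x < \tau)$, and $G_z(f_\tau, \psi) = \E[g_\tau(X) \mid \psi, Z = z]$.

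The crucial structural fact I would establish next is that $g_\tau$ is non-increasing in $x$: on $[\tau, \infty)$ it is identically $0$, and on $[0, \tau)$ the assumption that the gradient of $c(x_1, x_2)$ with respect to $x_1$ is negative gives that $c(x, \tau)$ decreases as $x \uparrow \tau$. With this monotonicity in hand, I would rewrite each social burden through an integration-by-parts (layer-cake) identity. Writing $F_z(x) := \P_\mathcal{D}[X < x \mid \psi, Z = z]$ and using that $F_z(0) = 0$ (since $\mathcal{X} = \R_{\ge 0}$) and $g_\tau(\infty) = 0$, the boundary terms vanish and one obtains $G_z(f_\tau, \psi) = \int_0^\infty F_z(x)\, d(-g_\tau)(x)$, where $d(-g_\tau)$ is a non-negative measure of total mass $c(0,\tau) > 0$ supported on $(0,\tau)$. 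Subtracting, $G(f_\tau, \psi) = \int_0^\infty \bigl(F_0(x) - F_1(x)\bigr)\, d(-g_\tau)(x)$. Feature bias \eqref{def:feature_bias} states precisely that $F_0(x) > F_1(x)$ on the relevant range, so integrating the pointwise inequality against the positive measure $d(-g_\tau)$ yields $G(f_\tau, \psi) > 0$ for every $\tau$.

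The main obstacle I anticipate is the interaction between rationality and the implicit cost cap in $u_x(f, x') = f(x') - c(x, x')$: an individual with $x < \tau$ manipulates only when $c(x, \tau) < 1$, so a very-low-feature individual for whom manipulation is not worthwhile pays $0$ rather than $c(x,\tau)$. This introduces an upward jump in the incurred cost at the willingness-to-move threshold and can break the global monotonicity of $g_\tau$, which is exactly what the stochastic-dominance step relies on. I would handle this either by invoking the social-burden convention of \citet{milli2019social}, under which the quantity of interest is the minimal cost to reach the threshold (so $g_\tau(x) = c(x,\tau)\,\mathbf{1}(x<\tau)$ genuinely applies), or by restricting attention to the regime where $c(x,\tau) \le 1$ throughout the support, i.e.\ where recourse is always utility-improving. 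In both cases $g_\tau$ is non-increasing and the argument above goes through; making this reduction explicit, and verifying that the boundary value $c(\tau,\tau) = 0$ produces no spurious discontinuity in the integration-by-parts step, is the part that requires the most care.
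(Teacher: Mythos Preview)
Your proposal is correct and follows essentially the same route as the paper: both arguments write the group-$z$ burden as $\int_0^\tau d(x,\tau)\,p_z(x)\,dx$, integrate by parts to obtain $-\int_0^\tau \partial_x d(x,\tau)\,P_z(x)\,dx$, and then conclude by combining the negative gradient with the CDF ordering $P_0(x) > P_1(x)$ from feature bias. Your layer-cake/Stieltjes phrasing with the measure $d(-g_\tau)$ is just a repackaging of the same integration-by-parts step; and your flagged obstacle about the rationality cap $c(x,\tau) < 1$ is a genuine subtlety that the paper simply sidesteps by taking the incurred cost to be $c(x,\tau)$ unconditionally, so your ``social-burden convention'' resolution is exactly what the paper implicitly assumes.
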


\noindent
Proposition~\ref{lem:positive_social_gap} states that if there is feature bias against group $0$
then using $f_\tau$ leads to higher expected strategic cost for group $0$ than group $1$.
For the one-dimensional setting, the above result shows that a single threshold-based classifier can be discriminatory.
Classifiers that use group-specific thresholds, on the other hand, can achieve low social burden gaps as we show below.
For $\tau_0, \tau_1{\in}\mathcal{X}$, let $f_{\tau_0, \tau_1}$ denote the classifier that uses threshold $\tau_0, \tau_1$ for group 0, 1 respectively.

\begin{proposition} \label{lem:negative_social_gap}
Suppose we are given a sub-population condition $\psi$
and a cost function $c(x_1, x_2)$.
Say there is feature bias against group $0$ in distribution $\mathcal{D}$ as defined in \eqref{def:feature_bias}.
Then there exist $\tau_0, \tau_1 \in \mathcal{X}^2$ such that 
$G(f_{\tau_0, \tau_1}, \psi) < 0$.
\end{proposition}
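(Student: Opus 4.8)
The plan is to decouple the two thresholds and exploit the asymmetry that Proposition~\ref{lem:positive_social_gap} already exposes: under feature bias group $0$ carries the larger burden whenever both groups face a common threshold, so by lowering \emph{only} group $0$'s threshold I should be able to drive its burden strictly below that of group $1$. First I would record the best-response structure for a single threshold classifier $f_\tau(x)=\mathbf{1}(x\ge\tau)$. An individual at $x\ge\tau$ is already accepted and pays nothing, while an individual at $x<\tau$ compares the utility $1-c(x,\tau)$ of relocating to $\tau$ against the utility $0$ of staying; hence they move (paying $c(x,\tau)$) exactly when $c(x,\tau)<1$, and otherwise remain at cost $0$. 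The realized manipulation cost is therefore $c(x,\tau)\,\mathbf{1}(x<\tau)\,\mathbf{1}(c(x,\tau)<1)$, and $G_z(f_{\tau_0,\tau_1},\psi)$ is the conditional expectation of this quantity, using $\tau_z$, over group $z$ restricted to $\psi$.

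Next I would set $\tau_0=0$. Since $\mathcal{X}=\R_{\ge 0}$, every member of group $0$ satisfies $X\ge 0=\tau_0$ and is accepted without any update, so under the convention that $f_\tau$ accepts $x\ge\tau$ we get $G_0(f_{0,\tau_1},\psi)=0$ for \emph{every} $\tau_1$. It then suffices to produce a threshold $\tau_1>0$ with $G_1(f_{0,\tau_1},\psi)>0$, because in that case $G(f_{0,\tau_1},\psi)=0-G_1(f_{0,\tau_1},\psi)<0$, which is exactly the desired conclusion. (If one instead adopts the strict convention $f_\tau(x)=\mathbf{1}(x>\tau)$, I would argue the same way by letting $\tau_0\downarrow 0$ and noting $G_0(f_{\tau_0,\tau_1},\psi)\to 0$, since both the below-threshold mass $\P_\mathcal{D}[X<\tau_0\mid\psi,Z=0]$ and the associated marginal costs vanish.)

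The substantive step is establishing such a $\tau_1$, and this is where I expect the main difficulty. The feature-bias hypothesis \eqref{def:feature_bias} is asserted over the nonempty range of $x$ for which the conditional CDFs lie in $(0,1)$, so the conditional law of $X$ given $\psi,Z=1$ is non-degenerate; I would pick $\tau_1$ in the interior of its support, guaranteeing $\P_\mathcal{D}[X<\tau_1\mid\psi,Z=1]>0$. I then need a positive-measure subset of these below-threshold individuals to \emph{actually} manipulate at strictly positive cost. By continuity of $d$, the cost $c(x,\tau_1)$ of a marginal move tends to $d(\tau_1,\tau_1)$ as $x\uparrow\tau_1$, so provided marginal updates near the threshold are worthwhile there is a band $(\tau_1-\delta,\tau_1)$ on which $0<c(x,\tau_1)<1$; choosing (or shrinking) $\tau_1$ so that this band carries positive group-$1$ mass yields $G_1(f_{0,\tau_1},\psi)>0$.

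The delicate point is precisely the interaction between the cost function on the diagonal and the location of group-$1$ mass: one must rule out the degenerate regime in which every potential mover faces cost $\ge 1$ at every threshold, for in that regime no group manipulates, $G_0\equiv G_1\equiv 0$, and the strict sign cannot be obtained. I would note that this regime is already excluded by the non-vacuousness implicit in Proposition~\ref{lem:positive_social_gap}, whose strictly positive gap forces genuine manipulation for some threshold; I would therefore state explicitly, as the mild standing assumption, that marginal updates near some interior support point of group $1$ cost less than the unit acceptance benefit. With that in hand the three steps combine to give $G(f_{0,\tau_1},\psi)=-G_1(f_{0,\tau_1},\psi)<0$.
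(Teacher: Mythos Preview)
Your argument is correct and takes a genuinely different---and more elementary---route than the paper's own proof. The paper argues analytically: it writes each group burden via integration by parts as $B_z=-\int_0^{\tau_z}\partial_x d(x,\tau_z)\,P_z(x)\,dx$, invokes the gradient bounds $g_l\le\partial_x d\le g_u\le 0$, applies the feature-bias inequality $P_1\le P_0$, and after a change of variables obtains that $G(f_{\tau_0,\tau_1},\psi)\le 0$ whenever $\tau_0\le (g_l/g_u)\,\tau_1$. Your construction instead collapses the problem by taking $\tau_0=0$ so that $G_0=0$ automatically, and then exhibits a $\tau_1$ with $G_1>0$ directly from continuity of $d$ and non-degeneracy of group $1$'s conditional law.

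Two observations on what each approach buys. First, your proof is strictly more elementary and, as you implicitly notice, does not really use the feature-bias hypothesis \eqref{def:feature_bias} except to ensure the group-$1$ distribution is non-degenerate; any distribution with interior support for group $1$ would do. This shows the existence claim is rather soft. Second, the paper's argument, though heavier, yields a \emph{quantitative} region of threshold pairs $(\tau_0,\tau_1)$ with nonpositive gap---precisely the kind of condition that feeds into the constrained optimization framework developed afterwards---whereas your witness $(\tau_0,\tau_1)=(0,\tau_1)$ accepts every group-$0$ individual and is not useful downstream. You are also more careful than the paper in tracking the rationality constraint $c(x,\tau)<1$ on who actually manipulates; the paper's proof silently treats every below-threshold individual as a mover, which amounts to the standing assumption you flag explicitly.
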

\noindent
Proposition~\ref{lem:negative_social_gap} shows that appropriately selected group-specific thresholds can lead to a relatively lower social burden for disadvantaged groups; strategies for efficiently searching for these appropriate thresholds are discussed later in this section.
The proofs of both propositions are presented in Appendix~\ref{sec:proofs}.
We next study whether group-specific classifiers that are fair with respect to selection rate fairness metric $H(f, \psi)$ also have low social burden gap $G(f, \psi)$.

As mentioned earlier, one of the goals of fair classification is to provide equal opportunities for all demographic groups.
By equalizing selection rates across groups, prior work forces the classifier to select more disadvantaged group individuals who otherwise would not be selected in the unconstrained case due to dataset biases. 
However, we show below that achieving fairness w.r.t. selection rate-based metrics (like statistical rate) may not lead to reduced strategic manipulation costs for the disadvantaged group.
\nrev{This is because the average strategic cost incurred by a group depends on the distance between the classifier's threshold for the group and the features of negatively classified individuals from this group; the greater this distance, the greater the strategic cost.
Even when a classifier $f$ with fair w.r.t. selection rate fairness metric $H(f, \cdot)$, it may not be fair w.r.t. social burden gap $G(f, \cdot)$ since the distance between classifier threshold and features of negatively-classified individuals of a minority group can still be large (Section~\ref{sec:synthetic} presents simulations on this point).
The following theorem quantifies this issue and shows that the social burden gap depends not just on selection rate disparity, but also on cost function and classifier properties.
}

\begin{theorem} \label{thm:main}
Suppose we are given group-specific thresholds $\tau_0, \tau_1$ and a sub-population condition $\psi$, and the cost function $c(x_1, x_2) = d(x_1,x_2) \mathbf{1} (x_2 > x_1)$. For a fixed $x_2$, suppose that the gradient of $d$ with respect to $x_1$ at any point in $(0, x_2)$ is in the range $[g_l, g_u]$, for some $g_l \leq g_u \leq 0$.
Let $P_z(\tau) = \P[X \in (0, \tau) \mid Z = z, \psi]$ and $E_{z,\tau} = \E[X \mid X \in [0,\tau], Z = z, \psi] P_z(\tau)$.
Then, we can bound the social burden gap of classifier $f_{\tau_0, \tau_1}$ as follows
$$
G(f_{\tau_0, \tau_1}, \psi) \leq g_u \tau_1 H(f_{\tau_0, \tau_1}, \psi) + (g_u\tau_1 - g_l \tau_0) P_0(\tau_0) - g_u E_{1,\tau_1}  + g_l E_{0,\tau_0}, $$ 
$$
G(f_{\tau_0, \tau_1}, \psi) \geq g_l \tau_1 H(f_{\tau_0, \tau_1}, \psi) + (g_l\tau_1 - g_u \tau_0) P_0(\tau_0) - g_l E_{1,\tau_1}  + g_u E_{0,\tau_0}.$$
\end{theorem}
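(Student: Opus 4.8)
The plan is to reduce the social burden gap to a difference of two per-group expectations, bound the per-individual manipulation cost using the gradient constraint on $d$, and then eliminate one tail probability in favor of the selection rate $H$. First I would fix the manipulation model: under the threshold classifier $f_{\tau_0,\tau_1}$, an individual in group $z$ with feature $x$ is positively classified exactly when $x \geq \tau_z$, and by outcome monotonicity the cheapest way to flip a negative decision is to move precisely to $\tau_z$. Hence the relevant per-individual cost is $c(x,\tau_z) = d(x,\tau_z)$ when $x < \tau_z$ and $0$ otherwise, so that $G_z(f_{\tau_0,\tau_1},\psi) = \E[d(X,\tau_z)\mathbf{1}(X < \tau_z) \mid \psi, Z=z]$. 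Next, fixing $x_2 = \tau_z$ and anchoring at $d(\tau_z,\tau_z)=0$ (no cost when no movement is required), I would write $d(x,\tau_z) = -\int_{x}^{\tau_z} \tfrac{\partial d}{\partial x_1}(t,\tau_z)\,dt$ and apply $\tfrac{\partial d}{\partial x_1} \in [g_l,g_u]$ on $(0,\tau_z)$ to obtain, for $x \in (0,\tau_z)$, the two-sided bound $-g_u(\tau_z - x) \leq d(x,\tau_z) \leq -g_l(\tau_z - x)$, with both sides nonnegative since $g_l \leq g_u \leq 0$.

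Taking expectations over $X < \tau_z$ and using $\E[(\tau_z - X)\mathbf{1}(X<\tau_z)\mid \psi, Z=z] = \tau_z P_z(\tau_z) - E_{z,\tau_z}$, this yields the per-group bracket $-g_u\bigl(\tau_z P_z(\tau_z) - E_{z,\tau_z}\bigr) \leq G_z \leq -g_l\bigl(\tau_z P_z(\tau_z) - E_{z,\tau_z}\bigr)$. To bound $G = G_0 - G_1$ from above I would pair the upper bound on $G_0$ with the lower bound on $G_1$ (and symmetrically for the lower bound on $G$), producing an expression in the tail masses $P_0(\tau_0)$ and $P_1(\tau_1)$. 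The decisive step is to link this to selection-rate fairness: for a threshold classifier $H_z = \P[X \geq \tau_z \mid \psi, Z=z] = 1 - P_z(\tau_z)$, so $H(f_{\tau_0,\tau_1},\psi) = H_0 - H_1 = P_1(\tau_1) - P_0(\tau_0)$, giving $P_1(\tau_1) = H + P_0(\tau_0)$. Substituting this collapses the $P_1$ terms, collects the remaining tail mass into $(g_u\tau_1 - g_l\tau_0)P_0(\tau_0)$, and surfaces the $g_u\tau_1 H$ term, recovering exactly the stated upper bound; the lower bound follows from the symmetric pairing.

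The genuinely non-routine part is this last substitution, since it is what converts a bound phrased in terms of the two group tail masses into a statement about the selection-rate metric $H$ — the whole point of the theorem. Everything else is careful sign bookkeeping, and the step I expect to be most error-prone is the fundamental-theorem-of-calculus bound in the second step, where integrating a \emph{negative} gradient reverses the roles of $g_l$ and $g_u$ and must be matched correctly with the upper/lower pairing when forming $G_0 - G_1$. One modeling point I would want to pin down is whether the social burden is taken as the minimum cost to flip the decision (which makes both bounds valid as written) rather than the cost under the rational best response $\Delta_f$; under the latter, individuals whose flipping cost exceeds the unit classification benefit do not move, which would only affect the lower bound and would require either a boundedness assumption on $d$ or reverting to the min-cost reading consistent with the social-burden definition.
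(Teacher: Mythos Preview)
Your proposal is correct and follows essentially the same route as the paper: both reduce $G_z$ to an integral involving $d(x,\tau_z)$, use the gradient bound to get $-g_u(\tau_z-x)\le d(x,\tau_z)\le -g_l(\tau_z-x)$ (the paper does this via integration by parts against the CDF, you via the fundamental theorem of calculus pointwise, which is the same computation), identify the resulting expectation as $\tau_z P_z(\tau_z)-E_{z,\tau_z}$, and then substitute $P_1(\tau_1)=H+P_0(\tau_0)$ to surface the $H$ term. Your caveat about best-response versus min-cost is a fair modeling remark but is orthogonal to the proof, which the paper carries out under the min-cost reading.
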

\noindent
The proof is presented in Appendix~\ref{sec:proofs}.
Note that Theorem~\ref{thm:main} does not assume feature bias (Eq~\eqref{def:feature_bias}) to be explicitly present and can handle generic feature distributions.
To interpret the above theorem, consider the impact of different $H(f_{\tau_0, \tau_1}, \psi)$ values.
For simplicity, suppose $\psi(x,y){=}1$ for all $(x,y)$
(i.e., $H(f_{\tau_0, \tau_1}, \psi)$ is the statistical rate).
When $H(f_{\tau_0, \tau_1}, \psi){=}0$, the classifier has an equal selection rate for group 0 and group 1.
Consider the setting when $d$ is linear, i.e., $d(x_1, x_2) = x_2 - x_1$. 
In this case, the upper and lower bounds are equal and the social burden gap is $(\tau_0 - \tau_1)P_0(\tau_0) - E_{0,\tau_0} + E_{1,\tau_1}$. Since $H(f_{\tau_0, \tau_1}, \psi) = 0$, we can simplify $G(f_{\tau_0, \tau_1}, \psi)$ to be $(\tau_0 - \E[X \mid X \in [0,\tau_0], Z = 0] - \tau_1 + \E[X \mid X \in [0,\tau_1], Z = 1]) P_0(\tau_0)$. 

In this equation, $(\tau_z - \E[X \mid X \in [0,\tau_z], Z = z])$ is the average distance of feature values of negative-classified individuals of group $z$ from the decision boundary.
The difference between these distances for group 0 and group 1 
depends on the choice of $\tau_0, \tau_1$ and group distributions. 
To intuitively understand this dependence, we provide simulations over datasets generated using Gaussian distributions in Section~\ref{sec:synthetic}. The simulations show that as the distribution variance increases, the social burden gap of classifiers, constrained to have $H(f_{\tau_0, \tau_1}, \psi){\approx}0$, can also dramatically increase.
This is why simply constraining $H(f_{\tau_0, \tau_1}, \psi)$ is not sufficient to obtain a classifier with a low social burden gap.
However, when $H(f_{\tau_0, \tau_1}, \psi){<}0$, the difference between the above distances is unlikely to be small since (a) low $H$ implies that $\tau_0$ is higher or similar to $\tau_1$ and (b) due to feature bias the feature values of group 0 are lower than group 1.
Hence, $H(f_{\tau_0, \tau_1}, \psi)$ being greater than or equal to 0 is necessary (but not sufficient) to have low social burden gap.

\vspace{0.5em}
\noindent
\textbf{Extension to group-specific cost functions.}
In many settings, strategic cost disparity can arise due to \textit{cost function biases}, i.e., from different groups having different cost functions.
Due to these biases, for the same unit of a feature update, the disadvantaged group would pay a larger cost than the advantaged group; e.g., African Americans face larger access barriers to credit than White Americans \cite{cohen2011credit}.
To account for group-specific costs,
Theorem~\ref{thm:main} can be extended to use group-specific gradient bounds for the cost function; incorporating them leads to the following bounds.
\begin{theorem} \label{thm:main_diff_costs}
Suppose we are given group-specific thresholds $\tau_0, \tau_1$ and a sub-population condition $\psi$. Let $c_z(x_1, x_2) = d_z(x_1,x_2) \mathbf{1} (x_2 > x_1)$ denote the cost for group $z$ individuals. For a fixed $x_2$, suppose that the gradient of $d_z$ with respect to $x_1$ at any point in $(0, x_2)$ is in the range $[g_{l,z}, g_{u,z}]$, for some $g_{l,z} \leq g_{u,z} \leq 0$ for all $z \in \set{0,1}$.
Let $P_z(\tau) := \P[X \in (0, \tau) \mid Z = z, \psi]$ and $E_{z,\tau} = \E[X \mid X \in [0,\tau], Z = z, \psi] P_z(\tau)$.
Then, we can bound the social burden gap of $f_{\tau_0, \tau_1}, \psi)$ as follows
$$
 G(f_{\tau_0, \tau_1}, \psi) \leq g_{u,1} \tau_1 H(f_{\tau_0, \tau_1}, \psi) + (g_{u,1}\tau_1 - g_{l,0} \tau_0) P_0(\tau_0) - g_{u,1} E_{1,\tau_1}  + g_{l,0} E_{0,\tau_0},$$ 
and lower bounded by
$$
G(f_{\tau_0, \tau_1}, \psi) \geq g_{l,1} \tau_1 H(f_{\tau_0, \tau_1}, \psi) + (g_{l,1}\tau_1 - g_{u,0} \tau_0) P_0(\tau_0) - g_{l,1} E_{1,\tau_1}  + g_{u,0} E_{0,\tau_0}.$$
\end{theorem}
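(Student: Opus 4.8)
The plan is to mirror the argument behind Theorem~\ref{thm:main}, the only change being that each group now carries its own gradient envelope $[g_{l,z}, g_{u,z}]$, so the group-$z$ contribution to the bounds must be tagged with the group-$z$ constants. First I would unpack the per-group social burden. Under outcome monotonicity and a threshold classifier $f_{\tau_0,\tau_1}$, a negatively classified individual from group $z$ (feature $X<\tau_z$) optimally moves to the boundary, so $\Delta_{f_{\tau_0,\tau_1}}(X)=\tau_z$ and pays $c_z(X,\tau_z)=d_z(X,\tau_z)$, while a positively classified individual pays nothing. Hence
$$G_z(f_{\tau_0,\tau_1}, \psi) = \E\!\left[ d_z(X, \tau_z)\,\mathbf{1}(X < \tau_z) \mid \psi, Z = z \right].$$

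Next I would linearize $d_z(\cdot,\tau_z)$ using the gradient envelope. For $x<\tau_z$, writing $d_z(x,\tau_z) = -\int_{x}^{\tau_z} \frac{\partial d_z}{\partial x_1}(t,\tau_z)\,dt$ (using $d_z(\tau_z,\tau_z)=0$), the assumption $\frac{\partial d_z}{\partial x_1}\in[g_{l,z},g_{u,z}]$ yields the pointwise sandwich
$$-g_{u,z}(\tau_z - x) \;\leq\; d_z(x, \tau_z) \;\leq\; -g_{l,z}(\tau_z - x),$$
both sides nonnegative since $g_{l,z}\leq g_{u,z}\leq 0$. Taking the conditional expectation over $\{X<\tau_z\}$ and using $\E[(\tau_z - X)\mathbf{1}(X<\tau_z)\mid\psi, Z=z] = \tau_z P_z(\tau_z) - E_{z,\tau_z}$, I obtain two-sided bounds on each $G_z$ expressed purely in terms of $P_z(\tau_z)$ and $E_{z,\tau_z}$, with the endpoints $g_{l,z}$ (upper) and $g_{u,z}$ (lower).

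Finally I would assemble $G = G_0 - G_1$ and pass to selection-rate language. The upper bound on $G$ combines the upper bound on $G_0$ (endpoint $g_{l,0}$) with the lower bound on $G_1$ (endpoint $g_{u,1}$); the lower bound reverses both. The closing step uses the threshold identity $P_z(\tau_z) = 1 - H_z(f_{\tau_0,\tau_1}, \psi)$, so $P_1(\tau_1) = P_0(\tau_0) + H(f_{\tau_0,\tau_1}, \psi)$; substituting this for the group-$1$ probability turns the $P_1(\tau_1)$ terms into the $H(f_{\tau_0,\tau_1},\psi)$ coefficient plus a $P_0(\tau_0)$ correction, reproducing both stated expressions. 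The only delicate point — and the most error-prone — is tracking which envelope endpoint attaches to which group in each direction: since every slope is negative and $G_1$ enters with a minus sign, the upper bound on $G$ upper-bounds the group-$0$ cost (endpoint $g_{l,0}$) while lower-bounding the group-$1$ cost (endpoint $g_{u,1}$), and the lower bound flips both choices. Keeping these sign flips straight is essentially the entire content of the generalization beyond Theorem~\ref{thm:main}.
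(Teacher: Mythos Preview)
Your proposal is correct and follows essentially the same route as the paper's proof: bound each group's burden via the gradient envelope, express the result through $\tau_z P_z(\tau_z) - E_{z,\tau_z}$, and then substitute $P_1(\tau_1) = P_0(\tau_0) + H(f_{\tau_0,\tau_1},\psi)$. The only cosmetic difference is that the paper reaches $\tau_z P_z(\tau_z) - E_{z,\tau_z}$ via two integrations by parts (first passing from $d_z\, p_z$ to $\partial_x d_z \cdot P_z$, then evaluating $\int_0^{\tau_z} P_z$), whereas you obtain it more directly by taking the expectation of the pointwise sandwich $-g_{u,z}(\tau_z - x) \le d_z(x,\tau_z) \le -g_{l,z}(\tau_z - x)$.
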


\noindent
The proof of the theorem is presented in Appendix~\ref{sec:proofs}.
Cost function biases can be inherently captured using the gradient of the group-specific cost functions. This is because these biases imply that the rate of increase in cost for the disadvantaged group is greater than that of the advantaged group.
Hence, if group 0 faces higher manipulation costs to move to the same point than group 1, then the absolute gradient of $c_0$ will be larger than the absolute gradient of $c_1$; this can result in higher social burden gaps than the case when cost function is same for both groups. 
For instance, suppose $d_0 (x_1, x_2){=}a(x_2{-}x_1)$ and  $d_1 (x_1, x_2){=}(x_2{-}x_1)$ and $H(f, \psi){=}0$.
If $a > 1$, then the social burden gap can be shown to be $(a(\tau_0 - \E[X \mid X \in [0,\tau_0], Z = 0]) - (\tau_1 - \E[X \mid X \in [0,\tau_1], Z = 1])) P_0(\tau_0)$ which is larger than the corresponding social burden gap when cost function is same for both groups.
The above theorem can account for both cost function and feature biases and provides a succinct relationship between standard fairness metrics and social burden gap.

\vspace{0.5em}
\noindent
\textbf{Fair classification with low social burden gap and low selection rate disparity}. 
To construct a classifier that has high institution utility and low social burden gap, we can employ Theorem~\ref{thm:main}, \ref{thm:main_diff_costs}.
Recall that loss function $\mathcal{L}$ measures the expected risk of any classifier.
Standard fair classification algorithms 
already optimize error rate of $f$ w.r.t. $\mathcal{L}$ and subject to selection rate constraints (i.e., $\E_{\mathcal{D}}[\mathcal{L}(f; X,Y)]$) subject to constraints on $H(f, \psi)$. 

To construct low social burden gaps, we can 
alternately use a modified constraint on the upper bound in Theorem~\ref{thm:main} or Theorem~\ref{thm:main_diff_costs} to obtain a classifier with low social burden gap. For example, in the setting of just feature bias, we can use the upper bound from Theorem~\ref{thm:main} as a constraint; i.e., {minimize $\E_{\mathcal{D}}[\mathcal{L}(f; X,Y)]$ subject to  $(g_u\tau_1 - g_l \tau_0) P_0(\tau_0) - g_u E_{1,\tau_1}  + g_l E_{0,\tau_0} \leq 0$}.
Quantities $P_z(\tau_z)$ and $E_{z,\tau_z}$ can be computed empirically for a given dataset:
$P_z(\tau_z), E_{z,\tau_z}$ are the fraction and empirical mean, respectively, of group $z$ individuals who are negatively classified.
In Section~\ref{sec:empirical}, we empirically show that using this modified optimization program results in classifiers that low social burden gaps.

\section{Extension to multiple dimensions} \label{sec:multi_dim}
Suppose that features are $n$-dimensional, for $n{>}1$.
For this multi-dimensional setting, we consider classifiers that 
threshold over a linear combination of the features of the individuals.
We again assume that all features are outcome monotonic, i.e., increasing each feature value results in increase in likelihood of positive classification. Hence, we only consider manipulations from $x_1$ to $x_2$ when $x_2 \geq x_1$, i.e., for all $i \in [n]$, $x_2^{(i)} \geq x_1^{(i)}$.
Finally, the cost function is assumed to be linear, i.e, for an individual from group $z \in \set{0,1}$, the cost function is $c_z(x_1, x_2) = d_z^\top (x_2 - x_1)$ if $x_2 \geq x_1$ and $c(x_1, x_2) =0$ if $x_2 \leq x_1$, given $d_0, d_1 \in \R^n$
(we study the quadratic cost function setting in Appendix~\ref{sec:proofs}).
Linear cost functions have been used in prior work to approximately model real-world strategic settings \cite{hardt2016strategic, hu2019disparate,estornell2021unfairness}.
Vector $d_z$ can encode the different costs paid for updating different features; e.g., in the credit score setting, opening new credit lines has lower costs than increasing annual income.
In this multi-dimensional setting, we can prove the following result.

% \vspace{-0.5em}
\begin{theorem} \label{thm:linear_multi}
Suppose we have a linear classifier $f$ such that for an individual with $x$ and group $z$, $f(x) = 1$ if and only if $u^\top x \geq v_z$ and 0 otherwise. For an individual from group $z$ with unmanipulated datapoint $x_1$, the cost to move to point $x_2$ is defined as $c_z(x_1, x_2) = d_z^\top (x_2 -x_1)$ if $x_2 \geq x_1$ and 0 otherwise, for $d_0, d_1 \in \R^n$.
Let 
$w_z^\star := \max_{i \in [n]} u_i/d_{z,i}$.
Then,
$$G(f, \psi) = -\frac{1}{w_1^\star} \left(v_1 H(f, \psi) \right){-}\delta ,$$
where $\delta = \left( \frac{v_1}{w_1^\star} - \frac{v_0}{w_0^\star}\right) P_0 - \frac{1}{w_1^\star} E_{1,v_1}+\frac{1}{w_0^\star} E_{0,v_0} $, $P_z =\P[f(X)=0 \mid Z =z, \psi]$, $E_{z,\tau}=\E[(u^\top X)\mid f(X)=0, Z=z, \psi]P_0$.
\end{theorem}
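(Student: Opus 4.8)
The plan is to reduce the theorem to two ingredients: an explicit closed form for the optimal strategic cost of a single negatively-classified individual, and a purely algebraic step linking the per-group social burdens to the selection-rate gap $H(f,\psi)$.

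First I would characterize the optimal manipulation. Fix a negatively-classified individual from group $z$ at $x_1$, so $u^\top x_1 < v_z$. Since $f(x')=1$ contributes only the fixed benefit, the optimal response $\Delta_f(x_1)$ is the cheapest point crossing the boundary, i.e. the solution of the linear program $\min_{\Delta \ge 0} d_z^\top \Delta$ subject to $u^\top \Delta \ge v_z - u^\top x_1$, where $\Delta = x' - x_1 \ge 0$ by outcome monotonicity. Each coordinate $i$ buys $u_i$ units of score per unit feature at cost $d_{z,i}$, so its score-per-cost ratio is $u_i/d_{z,i}$; a standard extreme-point argument shows the program is minimized by routing the entire required score gain $v_z - u^\top x_1$ through the single coordinate $i^\star = \arg\max_i u_i/d_{z,i}$, yielding minimum cost $(v_z - u^\top x_1)/w_z^\star$ with $w_z^\star = \max_i u_i/d_{z,i}$. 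Positively-classified individuals pay $0$. I expect this LP step to be the main obstacle, both in justifying the single-coordinate optimum (and its degenerate coordinates with $u_i=0$) and in pinning down the social-burden convention: the clean linear identity requires that every negatively-classified individual be charged the full recourse cost, so I must argue that no truncation to the case ``cost exceeds the benefit of crossing'' is applied.

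Next I would assemble the per-group social burden. Writing $f(X)=0$ for the negatively-classified event, the previous step gives
$$G_z(f,\psi) = \frac{1}{w_z^\star}\,\E\!\left[(v_z - u^\top X)\,\mathbf{1}(f(X)=0)\mid Z=z,\psi\right] = \frac{1}{w_z^\star}\bigl(v_z P_z - E_{z,v_z}\bigr),$$
where $P_z = \P[f(X)=0\mid Z=z,\psi]$ and $E_{z,v_z} = \E[u^\top X \mid f(X)=0,Z=z,\psi]\,P_z$; here I read the $P_0$ in the definition of $E_{z,\tau}$ as the intended group-matched $P_z$, since that is what makes the conditional expectation collapse to the unconditional truncated moment.

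Finally I would combine the two groups. Because $H_z(f,\psi)=\P[f(X)=1\mid Z=z,\psi]=1-P_z$, the selection-rate gap satisfies $H(f,\psi)=P_1-P_0$, hence $P_1 = P_0 + H(f,\psi)$. Substituting into $G(f,\psi)=G_0(f,\psi)-G_1(f,\psi)$ and eliminating $P_1$ in the $G_1$ term (which carries the factor $1/w_1^\star$) isolates the coefficient $-v_1/w_1^\star$ on $H(f,\psi)$; collecting the remaining $P_0$, $E_{0,v_0}$, and $E_{1,v_1}$ contributions reproduces exactly $-\delta$, establishing $G(f,\psi) = -\tfrac{1}{w_1^\star}\bigl(v_1 H(f,\psi)\bigr) - \delta$. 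This last part is routine bookkeeping once the sign conventions and the $P_1 = P_0 + H(f,\psi)$ substitution are fixed.
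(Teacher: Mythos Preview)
Your proposal is correct and the overall strategy matches the paper's: first show the optimal per-individual recourse cost equals $(v_z - u^\top x)/w_z^\star$, then aggregate and rewrite in terms of $H(f,\psi)$ via $P_1 = P_0 + H(f,\psi)$.

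The route differs in two places. For the LP step, the paper quotes a result of \citet{hu2019disparate} as a black box, whereas you argue the single-coordinate optimum directly via the score-per-cost ratio; your version is more self-contained and makes the extreme-point structure explicit. For the aggregation step, the paper reduces to the one-dimensional feature $u^\top X$ and invokes Theorem~\ref{thm:main_diff_costs} with $g_{u,z}=g_{l,z}=-1/w_z^\star$, so that the upper and lower bounds coincide; you instead compute $G_z(f,\psi)=\tfrac{1}{w_z^\star}(v_z P_z - E_{z,v_z})$ directly and do the $P_1$-substitution by hand. Your direct computation is shorter and avoids the machinery of Theorem~\ref{thm:main_diff_costs}, while the paper's reduction has the expository advantage of showing Theorem~\ref{thm:linear_multi} as a corollary of the one-dimensional framework. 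Your side observations---the typo $P_0$ versus $P_z$ in the definition of $E_{z,\tau}$, and the implicit assumption that negatively-classified individuals always manipulate regardless of whether the cost exceeds the unit benefit---are both accurate readings of what the paper is doing.
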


\noindent
We obtain an equality relation here
since the cost function is linear.
The proof (presented in Appendix~\ref{sec:proofs}) follows by reducing this case to the single-dimensional setting.
This is possible since in the case of linear classifiers the distance of negatively-classified individuals from the classifier threshold can be captured using $u^\top x$.
While limiting the analysis to linear classifiers might seem restrictive, Theorem~\ref{thm:linear_multi} still provides evidence that social burden gap $G(\cdot)$ can be large for classifiers in multi-dimensional settings even when selection rate disparity $H(\cdot)$ is small, due to additional factors captured by $\delta$.
Furthermore, empirical analysis of real-world fairness benchmark datasets (Section~\ref{sec:empirical}) shows that linear classifiers achieve close to state-of-the-art performance for these datasets. Thus, it is important to study constraints on linear classifiers that can ensure low manipulation costs for all groups.

\section{Empirical analysis} \label{sec:empirical}

\begin{figure*}[t]
    \centering
    \small
    \includegraphics[width=\linewidth]{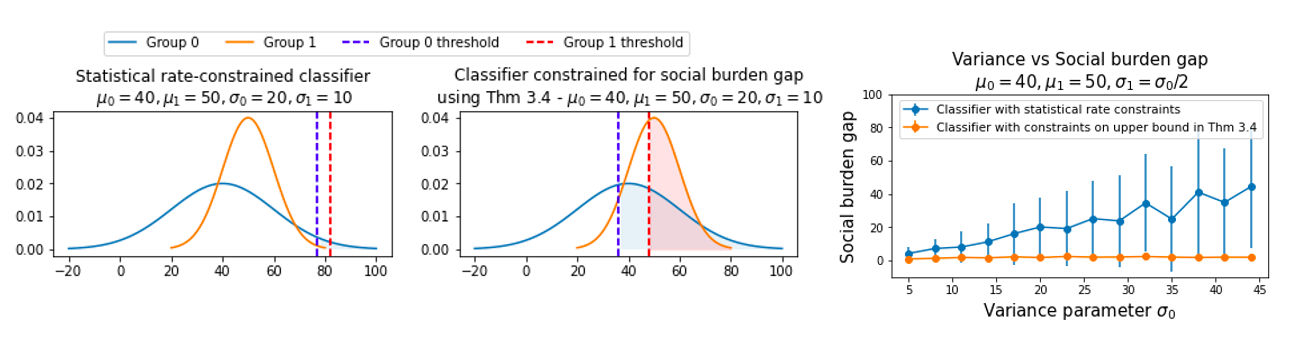}
    \subfloat[Stat. rate constrained classifier]{\hspace{.33\linewidth}}
    \subfloat[Thm 3.4 constrained classifier]{\hspace{.33\linewidth}}
    \subfloat[Variance vs social burden gap]{\hspace{.33\linewidth}}
    % \vspace{-1em}
    \caption{\small Performance of statistical rate constrained classifier and classifier constrained using our method on a synthetic dataset.
    Plots (a), (b) show the distribution and classifier thresholds for one random iteration.
    Plot (c) shows the mean and deviation (over 50 repetitions) of social burden gap of classifiers that are statistical rate-constrained  and classifiers that are constrained using Thm.~\ref{thm:main_diff_costs}.
    }
    \label{fig:syn_results}
    % \vspace{-2em}
\end{figure*}

\subsection{Synthetic Simulation} \label{sec:synthetic}

To intuitively explain different components of Theorem~\ref{thm:main}, \ref{thm:main_diff_costs}, we design a simulation using a synthetic data generation process.
Suppose that features of group $z \in \set{0,1}$ are sampled from $N(\mu_z, \sigma_z)$, where $\mu_0, \mu_1, \sigma_0 \in \R_{>0}$ and $\sigma_1 = \sigma_0/2$. Let $X_z$ denote the features of group $z$. Suppose that 
for element $x_i$ from group $z$, class label $y_i$ is 1 with probability $(x_i + \min(X_z))/(\max(X_z) + \min(X_z))$. We sample 500 elements for each group.
When $\mu_0 < \mu_1$, there will likely be feature bias in this dataset and any classifier $f$ trained over this dataset will have to use group-specific thresholds to achieve statistical parity. Suppose the cost function for manipulation is linear. 
Consider two classifiers trained using the above data.
The first classifier is trained to achieve maximum accuracy subject to $|H(f, \psi)| \leq 0.4$; here $\psi$ is the identity function.
The second classifier is trained to achieve maximum accuracy subject to the constraint that $|G(f, \psi)| \leq 4$.
In other words, the first classifier uses constraints on the statistical rate while the second classifier uses constraints on the social burden gap.
From Figure~\ref{fig:syn_results}a,b, we can see that these classifiers can have different group thresholds.

To understand Theorem~\ref{thm:main}, \ref{thm:main_diff_costs} using this example, note that the bound in both theorems depend on the difference between group-specific quantities $(\tau_z{-}\E[X{\mid}X{\in}[0,\tau_z], Z{=}z])$: the average distance of feature values of negative-classified individuals from group $z$ from the decision boundary.
This difference will increase as $\sigma_0$ increases since within-group variances will increase and the group 0 variance grows faster than group 1 variance (since $\sigma_1 = \sigma_0/2$).
Hence, as $\sigma_0$, increases, the social burden gap of a classifier can increase even when the statistical rate remains small.
We empirically observe this phenomenon in Figure~\ref{fig:syn_results}c, where we see that increasing $\sigma_0$ leads to an increase in the social burden gap of the statistical rate-constrained classifier.
However, we also observe that classifiers constrained using Theorem~\ref{thm:main_diff_costs} have almost-zero social burden gap, demonstrating that using Theorem~\ref{thm:main_diff_costs} for forming fairness constraints leads to classifiers with low social burden gaps for all $\sigma_0$ values.

\subsection{FICO Credit Dataset} \label{sec:fico_main}

\textbf{Dataset.} We use the FICO credit data \cite{Hardt2016EqualityOO} for preliminary real-world data analysis of classifiers that are fair with respect to standard fairness metrics and classifiers that are fair with respect to the social burden gap. 
This dataset contains 116k credit scores corresponding to White individuals and 16k credit scores corresponding to Black/African-American individuals and
a binary class label for loan default for each individual (pre-processing details are provided in Appendix~\ref{sec:emp_appendix}).
As shown by prior work \cite{milli2019social}, this dataset exhibits feature bias against African-American individuals.

\noindent
\textbf{Methodology.} 
Around 20k random samples from the dataset are removed to create a test partition.
Each classifier is composed of two thresholds $(\tau_0, \tau_1)$. Threshold $\tau_0$ is for credit scores of African-American individuals and threshold $\tau_1$ is for credit scores of White individuals.
A classifier assigns a positive class label to an individual if the individual's credit score is larger than the classifier's threshold for the individual's group.
Since the credit scores lie in the range from 1 to 100,  we evaluate all possible classifiers, with $\tau_0, \tau_1$ in the set $\set{1, 2, \dots, 100} \times \set{1, 2, \dots, 100}$, and record their properties.
We use 
the linear cost function $c(x, x') := \mathbf{1}(x > x') \cdot (x - x')$ for this section and 
provide results for the quadratic separable cost function
in Appendix~\ref{sec:fico_appendix}. 
We analyze classifier performance for two sub-population conditions: (a) $\psi_{sr}$ which is always 1, i.e., $\psi_{sr}(x,y){=}1$, for all $x,y$, and (b) $\psi_{tpr}$ which is 1 if true class label is 1, i.e., $\psi_{sr}(x,y){=}\mathbf{1}(y{=}1)$.
$H(f,\psi_{sr})$ measures statistical rate and $H(f,\psi_{tpr})$ measures true positive rate disparity.

\begin{figure*}[t]
    \centering
    \includegraphics[width=\linewidth]{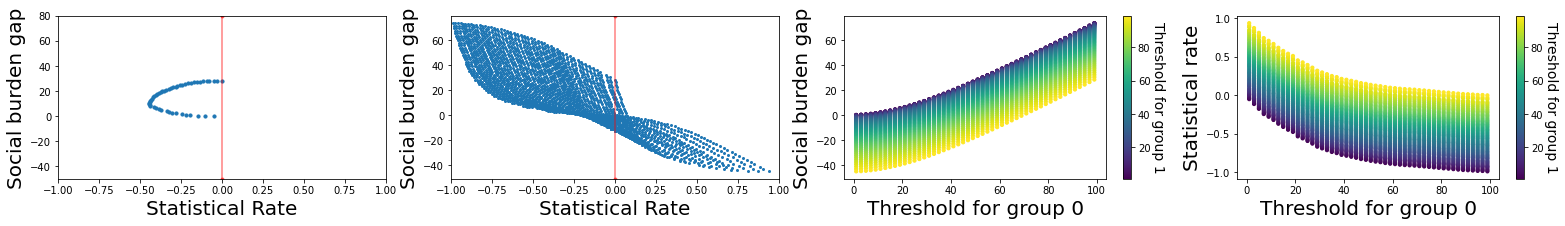}  
        % \vspace{-1em}
    \subfloat[$\tau_0{=}\tau_1{\in}\set{1, \dots, 100}$]{\hspace{.25\linewidth}}
    \subfloat[$(\tau_0, \tau_1){\in}\set{1, \dots, 100}^2$]{\hspace{.25\linewidth}}
    \subfloat[Social burden gap vs $\tau_0$]{\hspace{.25\linewidth}}
    \subfloat[Statistical rate vs $\tau_0$]{\hspace{.25\linewidth}}
    \caption{\small Statistical rate and social burden gap of all classifiers for FICO dataset. Each point represents a classifier and the axes plot different properties of these classifiers.
    Plot (a) presents the social burden gap $G(f, \psi_{sr})$ vs statistical rate $H(f, \psi_{sr})$ for classifiers that use the same threshold for both groups. 
     Plots (b), (c), (d) present social burden gap $G(f, \psi_{sr})$ vs statistical rate $H(f, \psi_{sr})$ for classifiers that can use group-specific threshold. 
     The range of statistical rate and social burden gap values achieved for these classifiers is larger.
    }
    \label{fig:fico_sr_results}
    % \vspace{-1em}
\end{figure*}

\begin{figure*}
    \centering
    \small
    \includegraphics[width=\linewidth]{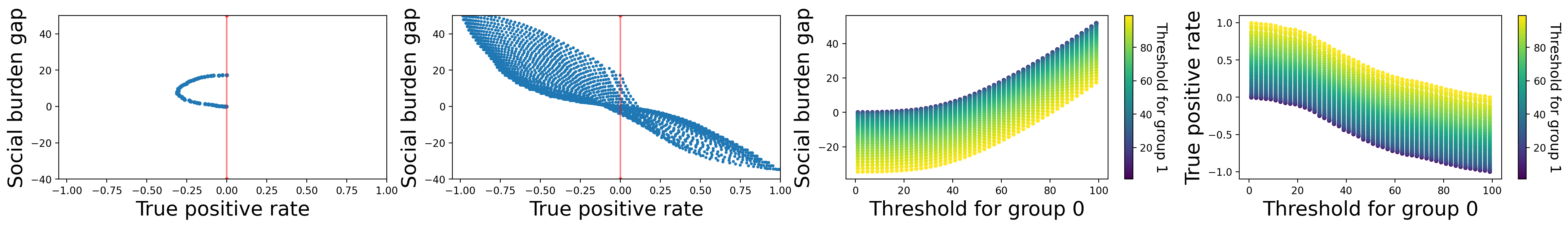}
        % \vspace{-1em}
    \subfloat[$\tau_0{=}\tau_1{\in}\set{1, \dots, 100}$]{\hspace{.25\linewidth}}
    \subfloat[$(\tau_0, \tau_1){\in}\set{1, \dots, 100}^2$]{\hspace{.25\linewidth}}
    \subfloat[Social burden gap vs $\tau_0$]{\hspace{.25\linewidth}}
    \subfloat[True positive rate vs $\tau_0$]{\hspace{.25\linewidth}}
    \caption{\small True positive rate and social burden gap of all classifiers for the FICO dataset.
    Plot (a) presents the social burden gap $G(f, \psi_{tpr})$ vs true positive rate $H(f, \psi_{tpr})$ for single-threshold classifiers. Once again, the true positive rate and social burden gap for these classifiers favors the majority group.
     Plot (b), (c), (d) present $G(f, \psi_{tpr})$ vs $H(f, \psi_{tpr})$ for classifiers that use group-specific thresholds. 
    }
    \label{fig:fico_tpr_results}
    % \vspace{-0.5em}
\end{figure*}

\noindent
\textbf{Results.}
\textit{Statistical rate $H(\cdot, \psi_{sr})$ vs social burden gap $G(\cdot, \psi_{sr})$.} 
Plot~\ref{fig:fico_sr_results}a presents the results for classifiers that use the same threshold for both groups.
As discussed in Proposition~\ref{lem:positive_social_gap},
these classifiers always have a social burden gap ${\geq}0$ and lead to higher strategic manipulation costs for African-American individuals. Furthermore, even the statistical rate of these classifiers is low implying that all classifiers using single thresholds select White individuals at a higher rate.
Plot~\ref{fig:fico_sr_results}b presents fairness metrics for classifiers that use group-specific thresholds.
Here, we observe that the range of values achieved for statistical rate and low social burden gap is much larger.
A classifier with a high statistical rate favoring the disadvantaged group (${>}0.5$) also has a low social burden gap (${<}{-}10$) for this dataset.
However, almost equal group selection rates do not imply parity with respect to social burden.
For classifiers with statistical rates close to 0,
the social burden gap ranges from $[-12, 30]$.

\begin{figure*}[t]
    \centering
    \small
    \includegraphics[width=\linewidth]{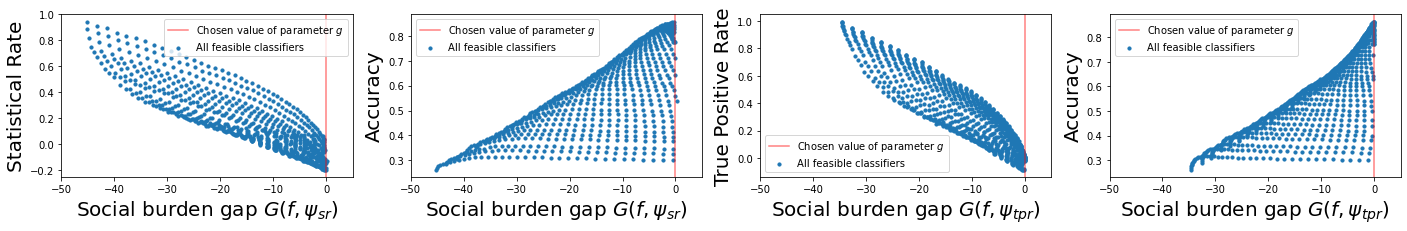}
    % \vspace{-1em}
    \subfloat[Stat. rate vs $G(f, \psi_{sr})$]{\hspace{.25\linewidth}}
    \subfloat[Accuracy vs $G(f, \psi_{sr})$]{\hspace{.25\linewidth}}
    \subfloat[TPR vs $G(f, \psi_{tpr})$]{\hspace{.25\linewidth}}
    \subfloat[Accuracy vs $G(f, \psi_{tpr})$]{\hspace{.25\linewidth}}
    \caption{\small Performance of classifiers that satisfy the modified fairness constraints.
    Plots (a),(b) present the statistical rate and accuracy vs social burden gap $G(f, \psi_{sr})$ for all classifiers for which condition~(\ref{eq}) is satisfied with $\psi = \psi_{sr}$.
    Plots (c),(d) present the true positive rate (TPR) and accuracy vs social burden gap $G(f, \psi_{tpr})$ for all classifiers for which condition~(\ref{eq}) is satisfied with $\psi = \psi_{tpr}$.
    }
    \label{fig:fico_fair_clf}
    % \vspace{-1em}
\end{figure*}

\textit{True positive rate $H(\cdot, \psi_{tpr})$ vs social burden gap $G(\cdot, \psi_{tpr})$.} 
With $\psi_{tpr}$,
we compute costs for individuals who are incorrectly negatively classified.
From Plot~\ref{fig:fico_tpr_results}a, we again see that single-threshold classifiers have social burden gap ${\geq}0$.
Plot~\ref{fig:fico_tpr_results}b, however, shows that there exist group-specific thresholds that result in low social burden gap and high true positive rate for African-American individuals.$\\$
Plots~\ref{fig:fico_sr_results}c, d, \ref{fig:fico_tpr_results}c, d %
present the relationship between group thresholds and fairness metrics.
Increasing group 0 threshold increases the social burden for group 0 and decreases the statistical/true positive rate as positive classifications decrease.

\textit{Constructing classifiers with low social burden gap.} 
We next empirically analyze the inequalities in Theorem~\ref{thm:main}.
Suppose the goal of the institution is to maximize accuracy subject to the constraint that social burden gap is ${\leq}g$, for some $g{\in}\R$. 
Since group 0 is the marginalized one, the institution aims to achieve a non-positive social burden gap to address the manipulation cost disparities.
As shown in \ref{thm:main}, social burden gap is upper bounded by $g_u \tau_1 H(f, \psi) + (g_u\tau_1 - g_l \tau_0) P_0(\tau_0) - g_u E_{1,\tau_1}  + g_l E_{0,\tau_0}$.
For the linear cost function, $g_u = g_l = 1$. Hence, we use the burden gap constraint 
\setlength{\abovedisplayskip}{1.5pt}
\setlength{\belowdisplayskip}{1.5pt}
\begin{equation}
\tau_1 H(f, \psi) - (\tau_1 - \tau_0) P_0(\tau_0) + E_{1,\tau_1} - E_{0,\tau_0}  \leq g. \label{eq}  
\end{equation}
For $g{=}0$, Figure~\ref{fig:fico_fair_clf} plots the accuracy and social burden gap of all classifiers that satisfy the above conditions for $\psi_{sr}$ and $\psi_{tpr}$.
The plots show that all classifiers that satisfy the constraint on the upper bound from Theorem~\ref{thm:main} satisfy the condition $G(f_{\tau_0, \tau_1}, \psi_{sr}){<}0$.
Furthermore, even in this case, the classifier that optimizes this constrained problem has high accuracy.
For $\psi_{sr}$ the accuracy of the optimal constrained classifier is 0.86 and $G(f_{\tau_0, \tau_1}, \psi_{sr}){=}{-}0.21$ and for $\psi_{tpr}$ the accuracy of the optimal constrained classifier is 0.86 and $G(f_{\tau_0, \tau_1}, \psi_{tpr}){=}{-}0.03$.
\nrev{In comparison, the accuracy of the optimal unconstrained classifier is 0.88, showing minimal loss in accuracy due to the constraints.}

\subsection{Adult Income Dataset}

\textbf{Dataset.} For analysis of multi-dimensional data, we use the Adult Income dataset.
We use the new version of this dataset developed and preprocessed by \citet{ding2021retiring}.
It contains information on around 251k individuals from the state of California surveyed in 2019. The classification task is to predict whether the income of an individual is above \$50k or not.
The strategic features available are ``class of worker'', ``occupation'', and ``hours worked per week'' (the other five features are listed in Appendix~\ref{sec:emp_appendix}).
We use race as the protected attribute, limiting the dataset to White (93\% of the dataset; $z=1$) and Black/African-American (7\% of the dataset; $z=0$) individuals.

\noindent
\textbf{Methodology.} 
The cost function used is linear and group-specific.
Let $d'{\in}\R^9$ be the underlying cost vector such that $d'_i{=}100$ if $i$ represents ``class of worker'' feature, $d'_i{=}10$ if $i$ represents ``occupation'', $d'_i{=}1$ if $i$ represents ``hours per week'', $d'_i{=}\infty$ for other non-strategic features.
$d'$ assigns a higher cost factor depending on the difficulty of updating a feature value.
The cost function for group 0 is $c_0(x, x'){:=}2{\cdot}d'^\top (x'{-}x)$ and cost function for group 1 is $c_1(x, x') := d'^\top (x'{-}x)$.
In this case, African-American individuals pay twice the cost that White individuals pay for the same feature update.
The dataset is partitioned into 80-20 random train-test splits.
Once again, suppose $\psi_{sr}(x,y){=}1$ for all $(x,y)$.

We will restrict the classifiers to be from the linear family and
use the logistic log-loss function $\mathcal{L}(f; x,y) := {-}y\log \sigma(f(x)) - (1{-}y)\log (1{-}\sigma(f(x)))$ to measure prediction error of $f$
(here $\sigma(\cdot)$ is the standard sigmoid function). 
We analyze the following different classifiers.
Classifier $f_{uncons}{:=}\arg\min_f \E[L(f; X,Y)]$ will denote the unconstrained classifier.
For pre-specified desired statistical rate $\epsilon{\in}[-1,1]$,
classifier $f_{sr}:= \arg \min_f \E[L(f; X,Y)]$ subject to $H(f, \psi_{sr}){\geq}\epsilon$.
Finally, for a pre-specified $g{\in}\R$, we construct classifiers with social burden gap $G(f, \psi_{sr}) \leq g$.
To do so, we use the result from Section~\ref{sec:multi_dim} and 
construct classifier $f_{strat} := \arg \min_f \E[L(f; X,Y)]$ subject to 
$-\frac{1}{w_1^\star} \left(v_1 H(f, \psi) \right){-}\delta \leq g$ (quantities $w_1^\star, v_1, \delta$ are defined in Theorem~\ref{thm:linear_multi}).
We will set $\epsilon{=}0$ and $g{=}0$ to analyze classifiers with equal selection rate and zero social burden gap in this section, and present variation of performance with these parameters in Appendix~\ref{sec:adult_appendix}.
\cg{To compare with another fair classification baseline, we also implement the fair logistic regression algorithm of \citet{rezaei2020fairness} with statistical rate constraints; we will call this classifier $f_{RFMZ}$. }
We report the mean and standard error of accuracy, statistical rate, and social burden gap of all classifiers over 100 random train-test splits.
Implementation details of all methods are provided in Appendix~\ref{sec:emp_appendix}.
\begin{table}[t]
\caption{\small Performance of the unconstrained classifier $f_{uncons}$, classifier constrained to achieve statistical rate $\geq 0$ $f_{sr}$, fair classifier from Rezaei et al. $f_{RFMZ}$, and classifier constrained to achieve low social burden gap using our method $f_{strat}$ on the Adult dataset.}
% \vspace{-1em}
\begin{center}
\begin{small}
\begin{sc}
\begin{tabular}{llcccr}
\toprule
&Classifier & Accuracy & Stat. Rate & Burden gap \\
\midrule
\multirow{3}{*}{Baselines}  & $f_{uncons}$ & \textbf{0.84} $\pm$ 0.00 & -0.11 $\pm$ 0.01 & 2.23 $\pm$ 0.04\\
& $f_{sr}$ & 0.83 $\pm$ 0.01 & 0.01 $\pm$ 0.03 &  1.12 $\pm$ 0.15 \\
& $f_{RFMZ}$ & 0.83 $\pm$ 0.00 & -0.08 $\pm$ 0.01 &  1.88 $\pm$ 0.08 \\
\midrule
Our method & $f_{strat}$ & 0.82 $\pm$ 0.00 & \textbf{0.24} $\pm$ 0.01 & \textbf{0.04} $\pm$ 0.01 \\
\bottomrule
\end{tabular}
\end{sc}
\end{small}
\end{center}
% \vspace{-1.5em}
\label{tab:adult_results}
\end{table}

\noindent
\textbf{Results.}
Table~\ref{tab:adult_results} presents the performance of classifiers $f_{uncons},$ $f_{sr}, f_{RFMZ}, f_{strat}$.
First note that the unconstrained classifier $f_{uncons}$ has an average statistical rate of -0.11 (i.e., the selection rate of African-American individuals is much lower than the selection rate of White individuals) and the average social burden gap is 2.23 (i.e, cost of strategic manipulation is higher for African-Americans).
Hence, fairness interventions are necessary in this case to achieve equal performance.
For classifier $f_{sr}$, the statistical rate is close to 0; however, the social burden gap is still greater than 0 in this case, showing that an almost equal selection rate does not necessarily imply a low social burden gap.
\cg{Similarly, baseline $f_{RFMZ}$ has a high social burden gap despite having a better statistical rate than the unconstrained classifier.
}
In comparison, our classifier $f_{strat}$ has a statistical rate of 0.24, i.e., the selection rate for African-Americans is higher.
Furthermore, classifier $f_{strat}$ achieves a social burden gap close to 0 on average. Hence, both groups pay almost equal 
cost for strategic manipulation.
In terms of accuracy, $f_{uncons}$ achieves the highest accuracy (0.84) and the accuracy of $f_{strat}$ is only slightly lower (0.82),
implying a minimal loss in accuracy due to fairness constraints.

\section{Discussion and Limitations} \label{sec:limitations}
Our paper untangles the relationship between manipulation cost disparities and standard fairness metrics and provides a framework to construct classifiers with low costs for minority groups.
The proposed framework can be useful for real-world classification settings where individuals repeatedly interact with an institution and its classifier (e.g., applying for loans after rejection with updated features).
In these settings, the institution can employ our framework to ensure that minority individuals do not pay disparately higher costs to exercise their available recourse options.
In this section, we discuss philosophical grounding, practical advantages, and certain limitations
of our framework.

\noindent
\textbf{Philosophical grounding.}
\citet{venkatasubramanian2020philosophical} argued how recourse options 
can be systematically helpful to groups that have been historically oppressed.
In particular, they distinguish between ``\textit{token acts of exercising recourse (reversing a single harmful decision) and the general state of enjoying systematic access to the power to reverse harmful decisions (knowing that if a harmful decision were to be made, one would be able to get it reversed)}.''
Recognizing this distinction at an institutional level 
motivates the construction of classifiers that equalize manipulation costs across groups to ensure systematic access to recourse for all.
This way, the institution can acknowledge biases in data and costs and provide redressal mechanisms that account for structural inequalities.

\noindent
\textbf{Practical advantages and information required by the institution to address manipulation disparities.} Beyond the presented analysis, our framework has advantages that allow for easy use in applications. For instance, the bounds on the social burden gap
require minimal information about cost functions
and can handle settings where cost functions vary across individuals but belong to the same class of gradient-bounded or Lipschitz functions.
Hence, an institution aiming to construct classifiers with a low social burden gap would not need to model the complete update behaviors of every individual.
Nevertheless, some information about the cost function gradient is required for implementing our framework in practice. While this can be quantified by observing the past and current behavior of individuals who have been negatively classified, errors in cost function gradient measurement can affect the performance of our proposed fairness intervention. Future work can additionally explore methods to reduce strategic manipulation disparities while ensuring that the method is robust to gradient measurement errors.

A social burden gap of 0 implies equal manipulation costs for all groups; however, in some cases, a gap of less than 0 may be necessary.
In multi-feedback settings, the predictions at one time step affect the classifier training at the next time step \cite{perdomo2020performative} and biases can be amplified across feedback iterations.
In these cases, our framework can also be used to construct a classifier with a negative social burden gap 
to tackle these 
biases.

\noindent
\textbf{Negative manipulations. }
Strategic manipulations can potentially be used to ``game'' a classifier \cite{hardt2016strategic} and, by reducing manipulation costs, our methods can potentially lead to increased ``gamification''.
The issue of gamification primarily arises due to noisy features that are only superficially predictive of the class label. In the absence of other robust features, a classifier will use these noisy features for prediction.
However, the presence of noisy features does not warrant disparity in manipulation costs, especially if these features favor the majority group.
Nevertheless, recent papers have also suggested classifier designs that incentivize \textit{positive manipulations} to improve individuals' task-related qualities
\cite{kleinberg2020classifiers}; using our framework with these classifiers can ensure that all groups have equal opportunities for improvement.

\noindent
\textbf{Model limitations.}
For multi-dimensional settings, our framework considers linear and quadratic cost functions.
Extensions of our framework for generic cost functions in multi-dimensional settings can be further studied as part of future work.
Secondly, while we consider binary protected attributes in our analysis, our results can be extended to non-binary attributes.
This is because the non-binary setting can be reduced to the binary setting by considering the pairwise comparison of measures for different protected attribute values.
However, due to multiple comparisons, the bounds for $G(\cdot,\cdot)$ will be weaker, and future work can explore ways to improve these bounds for non-binary attributes. 

Additional limitations of our framework are related to the accuracy of information about the individuals available to the institution. 
As mentioned earlier, if the institution does not have accurate information about the costs 
associated with feature updates, then our proposed framework might not be completely effective in addressing manipulation disparities.
Another information-based limitation
is the assumption that the classifier used by the institution is known. 
This may not be true in real-world settings and only partial information about decision rules
may be publicly available.
Recent work by \citet{ghalme2021strategic} aims to address this problem in general strategic classification settings and our framework can potentially be extended in the future along similar lines.

\section{Conclusion}

We study the impact of fair classifiers
on individuals' ability to positively manipulate their features based on their group membership.
In settings where feature distributions or cost functions are biased against minority groups, we observe that classifiers can have (almost) equal selection rates for all groups but can still have relatively higher costs for strategic manipulation for minority groups.
We propose modified fairness constraints to construct classifiers that reduce this 
disparity and show its efficacy over multiple datasets.
Our work demonstrates the necessity of analyzing the impact of fair classifiers in dynamic settings and developing approaches that provide equal opportunities independent of group memberships.

\clearpage

\bibliography{references}
\bibliographystyle{plainnat}

\clearpage

\appendix
\section{Proofs} \label{sec:proofs}

\textbf{Proof of Proposition~\ref{lem:positive_social_gap}.}
Let $p_z$ denote $\P[X \mid \psi(X,Y)=1, Z=z]$.
Note that the threshold $\tau$ is the same for all individuals here.
The cost assigned to group $Z=z$ by this classifier can be quantified as the distance of $x$ to threshold $\tau$ due to the monotonicity assumption and since this move maximizes the individual's utility.
\[B_z = \E[c(x, z) \mid \psi, Z=z] = \int_{\mathcal{X}} c(x, \tau) p_z(x) dx = \int_{0}^{\tau} d(x, \tau) p_z(x) dx.\]
The second equality follows from the fact that individuals with $f(x) = 1$ pay zero cost.
Performing integration by parts, we get
\[B_z = \int_{0}^{\tau} d(x, \tau) p_z(x)dx = -\int_{0}^{\tau} \pdv{d(x, \tau)}{x} P_z(x) dx,\]
where $P(x) = \int_{0}^x p(x) dx$ is the cumulative density function.
Therefore, the disparity will be 
\begin{align*}
G(f_{\tau}, \psi) &= B_0 - B_1
= \int_{0}^{\tau} \pdv{d(x, \tau)}{x} P_1(x) dx - \int_{0}^{\tau} \pdv{d(x, \tau)}{x} P_0(x) dx\\ &= \int_{0}^{\tau} \pdv{d(x, \tau)}{x} \left( P_1(x) - P_0(x) \right) dx.
\end{align*}

\noindent
Note that $\pdv{d(x,\tau)}{x}$ is negative due to the monotonicity assumption and $P_1(x) < P_0(x)$ by definition of feature bias. Hence, we get the result of \citet{milli2019social} that $G(f_{\tau}, \psi) > 0$.

\vspace{0.5em}
\noindent
\textbf{Proof of Proposition~\ref{lem:negative_social_gap}.}
Let $p_z$ denote $\P[X \mid \psi(X,Y)=1, Z=z]$.
For the strategic cost function $c$, suppose that the gradient of the function $c$ with respect to $x_1$ at any point is in the range $[g_l, g_u]$, for some $g_l \leq g_u \leq 0$.
From to the previous proof, we know that the cost assigned to group $Z=z$ by this classifier is 
\[B_z = \int_{0}^{\tau_z} d(x, \tau_z) p_z(x)dx = -\int_{0}^{\tau_z} \pdv{d(x, \tau_z)}{x} P_z(x) dx,\]
Therefore, the disparity will be $G(f_{\tau_0, \tau_1}, \psi) = B_0 - B_1$
\begin{align*}
 = \int_{0}^{\tau_1} \pdv{d(x, \tau_1)}{x} P_1(x) dx - \int_{0}^{\tau_0} \pdv{d(x, \tau_0)}{x} P_0(x) dx
\end{align*}
When $\tau_0 = \tau_1$, we get positive social burden gap from Proposition~\ref{lem:positive_social_gap}.
When $\tau_0 < \tau_1$, the expression can take different values depending on choice of these thresholds.
\begin{align*}
G(f_{\tau_0, \tau_1}, \psi) = B_0 - B_1  
& \leq g_u \int_{0}^{\tau_1} P_1(x) dx -  g_l \int_{0}^{\tau_0} P_0(x) dx \\
& \leq g_u \int_{0}^{\tau_1} P_0(x) dx -  g_l \int_{0}^{\tau_0} P_0(x) dx,
\end{align*}
since $P_1(x) \leq P_0(x)$. Using change of variables, $x' = x \cdot g_u/g_l $,
\begin{align*}
&G(f_{\tau_0, \tau_1}, \psi) \leq g_u \int_{0}^{\tau_1} P_0(x) dx -  g_l \int_{0}^{\tau_0} P_0(x) dx\\
&= g_u \int_{0}^{\tau_1} P_0(x) dx -  g_u \int_{0}^{ \tau_0 g_u/g_l} P_0(x') dx' = g_u \int_{ \tau_0 g_u/g_l}^{\tau_1} P_0(x) dx.
\end{align*}
Since, $g_u < 0$, the above term is non-positive if $\tau_0 \leq \frac{g_l}{g_u} \tau_1.$

\vspace{1em}
\noindent
\textbf{Proof of Theorem~\ref{thm:main}.}
Let $p_z := \P[X \mid \psi(X,Y)=1, Z=z]$. By definition,
\begin{align*}
    &G(f_{\tau_0, \tau_1}, \psi) = \E[c(X, \tau_0) \mid \psi, Z=0]  -  \E[c(X, \tau_1) \mid  \psi,Z=1]\\
    &= \int_{\mathcal{X}} d(x,\tau_0) \mathbf{1}(x < \tau_0) p_0(x) dx - \int_{\mathcal{X}} d(x, \tau_1) \mathbf{1}(x < \tau_1) p_1(x)dx\\
    &= \int_{0}^{\tau_0} d(x,\tau_0) p_0(x) dx - \int_{0}^{\tau_1} d(x,\tau_1) p_1(x) dx.
\end{align*}

\noindent
Analyzing each term above individually and integrating by parts,
\begin{align*}
  \int_{0}^{\tau_z} d(x,\tau_z) p_z(x) dx = d(x,\tau_z) P_z(x)|_{0}^{\tau_z} - \int_{0}^{\tau_z} \pdv{d(x,\tau_z)}{x} P_z(x) dx.
\end{align*}
where $P_z(x) = \int_{0}^x p(x) dx$ is the cumulative density function.
Since $P_z(0) = 0$ and $d(\tau_z, \tau_z) = 0$, we have that
\begin{align*}
   G(f_{\tau_0, \tau_1}, \psi) = \int_{0}^{\tau_1} \pdv{d(x,\tau_1)}{x} P_1(x) dx - \int_{0}^{\tau_0} \pdv{d(x,\tau_0)}{x} P_0(x) dx.
\end{align*}
Next, using gradient bounds, we can simplify each term above:
\[g_l \int_{0}^{\tau_z} P_z(x) dx \leq \int_{0}^{\tau_z} \pdv{d(x,\tau_z)}{x} P_z(x) dx \leq g_u \int_{0}^{\tau_z} P_z(x) dx.\]
Once again, integrating by parts, we get that $\int_{0}^{\tau_z} P_z(x) dx =$
\begin{align*}
 & [x P_z(x) - \E[X \mid X \in [0,x], Z=z, \psi] \P[X \in [0,x], Z=z, \psi]]_0^{\tau_z}\\ 
&= \tau_z P_z(\tau_z) - E_{z, \tau_z} \P[X \in [0,\tau_z], Z=z, \psi].
\end{align*}
where $E_{z, \tau_z} := \E[X \mid X \in [0,\tau_z], Z=z, \psi] P_z(\tau_z)$.
We can similarly write selection rate disparity $H(\cdot)$ as
\begin{align*}
H(f_{\tau_0, \tau_1}, \psi) 
&= \int_{\tau_0}^{\infty} p_0(x) dx - \int_{\tau_1}^{\infty} p_1(x) dx \\
& = \int_{0}^{\tau_1} p_1(x) dx -\int_{0}^{\tau_0} p_0(x) dx = P_1(\tau_1) - P_0(\tau_0).
\end{align*}
Now substituting the expressions for $\int_{0}^{\tau_z} P_z(x) dx$ and $H(\cdot)$ in the bounds for $G(\cdot)$, we get that
\begin{align*}
   &G(f_{\tau_0, \tau_1}, \psi) \leq g_u \tau_1 P_1(\tau_1) - g_u E_{1,\tau_1} - g_l  \tau_0 P_0(\tau_0) + g_l E_{0,\tau_0} \\
   &= g_u \tau_1 P_1(\tau_1) - g_l  \tau_0 P_0(\tau_0) - g_u E_{1,\tau_1}  + g_l E_{0,\tau_0} \\
   &= g_u \tau_1 H(f_{\tau_0, \tau_1}, \psi) + g_u \tau_1 P_0(\tau_0) - g_l  \tau_0 P_0(\tau_0) - g_u E_{1,\tau_1}  + g_l E_{0,\tau_0} \\
   &\leq g_u \tau_1 H(f_{\tau_0, \tau_1}, \psi) + (g_u\tau_1 - g_l \tau_0) P_0(\tau_0) - g_u E_{1,\tau_1}  + g_l E_{0,\tau_0}.
\end{align*}
Similarly, for the lower bound
\begin{align*}
   &G(f_{\tau_0, \tau_1}, \psi) \geq g_l \tau_1 P_1(\tau_1) - g_l E_{1,\tau_1} - g_u  \tau_0 P_0(\tau_0) + g_u E_{0,\tau_0} \\
   &= g_l \tau_1 H(f_{\tau_0, \tau_1}, \psi) + g_l \tau_1 P_0(\tau_0) - g_u  \tau_0 P_0(\tau_0) - g_l E_{1,\tau_1}  + g_u E_{0,\tau_0} \\
  &\geq g_l \tau_1 H(f_{\tau_0, \tau_1}, \psi) + (g_l\tau_1 - g_u \tau_0) P_0(\tau_0) - g_l E_{1,\tau_1}  + g_u E_{0,\tau_0}.
\end{align*}

\vspace{0.5em}
\noindent
\textbf{Proof of Theorem~\ref{thm:main_diff_costs}.}
Let $p_z$ denote $\P[X \mid \psi(X,Y)=1, Z=z]$.
\begin{align*}
    &G(f_{\tau_0, \tau_1}, \psi) = \E[c_0(X, \tau_0) \mid \psi, Z=0]  -  \E[c_1(X, \tau_1) \mid  \psi,Z=1]\\
    &= \int_{\mathcal{X}} c_0(x,\tau_0) p_0(x) dx - \int_{\mathcal{X}} c_1(x,\tau_1) p_1(x)dx\\
    &= \int_{\mathcal{X}} d_0(x,\tau_0) \mathbf{1}(x < \tau_0) p_0(x) dx - \int_{\mathcal{X}} d_1(x, \tau_1) \mathbf{1}(x < \tau_1) p_1(x)dx\\
    &= \int_{0}^{\tau_0} d_0(x,\tau_0) p_0(x) dx - \int_{0}^{\tau_1} d_1(x,\tau_1) p_1(x) dx.
\end{align*}

\noindent
From the proof of Theorem~\ref{thm:main}, we know that
\begin{align*}
   G(f_{\tau_0, \tau_1}, \psi) = \int_{0}^{\tau_1} \pdv{d_1(x,\tau_1)}{x} P_1(x) dx - \int_{0}^{\tau_0} \pdv{d_0(x,\tau_0)}{x} P_0(x) dx.
\end{align*}
Using the inequalities from the proof of Theorem~\ref{thm:main}, we get the following upper bound.
\begin{align*}
   &G(f_{\tau_0, \tau_1}, \psi) \leq g_{u,1} \tau_1 P_1(\tau_1) - g_{u,1} E_{1,\tau_1} - g_{l,0}  \tau_0 P_0(\tau_0) + g_{l,0} E_{0,\tau_0} \\
   &\leq g_{u,1} \tau_1 H(f_{\tau_0, \tau_1}, \psi) + (g_{u,1}\tau_1 - g_{l,0} \tau_0) P_0(\tau_0) - g_{u,1} E_{1,\tau_1}  + g_{l,0} E_{0,\tau_0}.
\end{align*}
Similarly, for the lower bound
\begin{align*}
   &G(f_{\tau_0, \tau_1}, \psi) \geq g_{l,1} \tau_1 P_1(\tau_1) - g_{l,1} E_{1,\tau_1} - g_{u,0}  \tau_0 P_0(\tau_0) + g_u E_{0,\tau_0} \\
  &\geq g_{l,1} \tau_1 H(f_{\tau_0, \tau_1}, \psi) + (g_{l,1}\tau_1 - g_{u,0} \tau_0) P_0(\tau_0) - g_{l,1} E_{1,\tau_1}  + g_{u,0} E_{0,\tau_0}.
\end{align*}

\vspace{1em}
\noindent
\textbf{Proof of Theorem~\ref{thm:linear_multi}.}
We will use the result of \citet{hu2019disparate} for this analysis. In particular, they prove the following theorem.

\begin{theorem}[\cite{hu2019disparate}] \label{thm:hu_result}
Suppose we have a linear classifier $f_{u,v}$ such that $f(u,v)=1$ if $u^\top x \geq v$ and 0 otherwise. Consider a candidate with unmanipulated feature $x$ and linear costs $c(x, x') = d^\top (x'-x)$ to move to feature $x' \geq x$ for a given $d{\in}\R^n$. Let $K = \arg \max_{i \in [n]} u_i/d_i$. Then the candidate's best strategy for manipulation is to move to point
$x' := x + \sum_{i=1}^n  e_i t_i/d_i,$
where $e_i$ is the unit vector for component $i$, $t_i \geq 0$ for all $i \in [n]$,  $t_i =0 $ for $i \notin K$, and $u^\top x' = v$.
\end{theorem}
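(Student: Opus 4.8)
The plan is to recognize the best response as the solution of a small linear program over feature increments and to solve that program by an exchange (greedy) argument on per-coordinate cost-efficiencies. First I would dispose of the easy case: if $u^\top x \geq v$ the individual already satisfies $f(x)=1$, so the utility $u_x(f,x')=f(x')-c(x,x')$ is maximized by staying (zero cost) and no manipulation occurs. Assume therefore $u^\top x < v$. Because $f$ takes only the values $0$ and $1$, the only manipulations worth comparing to staying are those that flip the label to $1$, and among these the utility-maximizing one is the cheapest. Writing $\delta := x'-x$ and $b := v - u^\top x > 0$, and using outcome monotonicity (only feature increases help cross the threshold, and with $d_i>0$ a decrease only raises cost) to restrict to $\delta \geq 0$, the optimal manipulation is the minimizer of $d^\top \delta$ subject to $u^\top \delta \geq b$ and $\delta \geq 0$.

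Next I would show the threshold constraint is tight. If some feasible $\delta$ had $u^\top\delta > b$, then since $b>0$ forces $\delta\neq 0$ and $d_i>0$, scaling $\delta \mapsto \lambda\delta$ with $\lambda = b/(u^\top\delta) < 1$ keeps $\delta \geq 0$ and $u^\top\delta = b$ while strictly reducing $d^\top\delta$; hence at the optimum $u^\top\delta = b$, i.e. $u^\top x' = v$, which is the boundary condition in the statement.

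The core step is the change of variables $t_i := d_i\,\delta_i \geq 0$, so that $\delta_i = t_i/d_i$, $x' = x + \sum_{i} e_i\, t_i/d_i$, the total cost becomes $d^\top\delta = \sum_i t_i$, and the constraint reads $\sum_i (u_i/d_i)\,t_i = b$. Setting $r_i := u_i/d_i$ and $r^\star := \max_i r_i$ (so $K = \{i : r_i = r^\star\}$), for every feasible $t \geq 0$ I would bound $\sum_i t_i \geq \tfrac{1}{r^\star}\sum_i r_i t_i = \tfrac{b}{r^\star}$, using $r_i t_i \leq r^\star t_i$ coordinatewise. Equality forces $t_i(r^\star - r_i)=0$ for all $i$, i.e. $t_i = 0$ whenever $i \notin K$; conversely any nonnegative allocation supported on $K$ with $\sum_{i\in K} t_i = b/r^\star$ attains the minimum. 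This is precisely the claimed form $x' = x + \sum_i e_i\, t_i/d_i$ with $t_i \geq 0$, $t_i = 0$ for $i \notin K$, and $u^\top x' = v$.

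I expect the main obstacle to be careful bookkeeping rather than conceptual depth. The genuinely load-bearing assumption is $d_i > 0$ (strictly positive costs), which both validates the $t_i = d_i\delta_i$ substitution and justifies restricting to $\delta \geq 0$; I would state it explicitly and note that coordinates with $u_i \leq 0$ are automatically excluded from $K$. The remaining subtleties are the tie case (multiple maximizers in $K$, handled by allowing any nonnegative split) and, for the phrase ``best strategy,'' comparing the minimal flip cost $b/r^\star$ against the unit gain from $f$ to confirm that moving is preferred to staying whenever the theorem's manipulation is invoked. These are routine to dispatch but should be made explicit for full rigor.
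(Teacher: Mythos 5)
Your proof is correct, and it is essentially the canonical argument: after disposing of the already-accepted case and reducing the best response to the linear program $\min_{\delta \geq 0}\, d^\top \delta$ subject to $u^\top \delta \geq v - u^\top x$, your substitution $t_i = d_i \delta_i$ together with the bang-per-buck bound $\sum_i t_i \geq \frac{1}{r^\star}\sum_i r_i t_i = b/r^\star$ and the equality condition $t_i(r^\star - r_i) = 0$ forcing support on $K$ is exactly the reasoning behind the cited result. Note that this paper itself gives no proof of the statement — it is imported verbatim from \citet{hu2019disparate} and used as a black box in the proof of Theorem~\ref{thm:linear_multi} — so your write-up is a correct self-contained derivation, and the caveats you flag ($d_i > 0$, tightness of $u^\top x' = v$ by downward scaling, and the stay-versus-move comparison of $b/r^\star$ against the unit reward) are precisely the hypotheses the original source also relies on.
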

\noindent
Note that we have group-specific classifiers of the form $u^\top x \geq v_0$ for group 0 and $u^\top x \geq v_1$ for group 1.

Let $K_z := \arg \max_{i \in [n]} u_i/d_{z,i}$
Then for an individual with feature $x$ and group $z$ such that $f(x) = 0$, the cost paid by this individual to achieve a positive classification is $\sum_{i\in K_z} t_i$ (from Theorem~\ref{thm:hu_result}).
Also note that $x' = x+ \sum_{i \in K_z} \frac{t_i}{d_{z,i}} e_i$. Then 
\begin{align*}
u^\top (x+  \sum_{i \in K_z} \frac{t_i}{d_{z,i}} e_i) = v_z &\implies u^\top x+ \sum_{i \in K_z} \frac{u_i}{d_{z,i}} t_i = v_z\\ &\implies \sum_{i\in K_z} t_i = \frac{1}{w_z^\star} (v_z - u^\top x),
\end{align*}
where $w_z^\star = \max_{i \in [n]} u_i/d_{z,i}$.
To reduce the single-dimensional setting, suppose the individual's feature is basically $v = u^\top x$.
The cost for update is then $\frac{1}{w_z^\star} (v_z -v)$.
The gradient of this cost function with respect to $v$ is $g_{u,z} = g_{l,z} = -\frac{1}{w_z^\star}$.
Then, using \ref{thm:main_diff_costs} and since the cost function is linear, we can upper and lower bound the social burden gap by
\[\textstyle
-\left( \frac{1}{w_1^\star} v_1 H(f_{u, v_0, v_1}, \psi) + (\frac{1}{w_1^\star} v_1 - \frac{1}{w_0^\star}v_0)P_0 - \frac{1}{w_1^\star} E_{1,v_1}  + \frac{1}{w_0^\star} E_{0,v_0} \right),\]
$E_{z,\tau} = \E[(u^\top X) \mid f(X)=0, Z=z, \psi]\P[f(X)=0 \mid Z=z, \psi]$.
Therefore,
\[G(f_{u, v_0, v_1}, \psi) = -\frac{1}{w_1^\star} \left(v_1 H(f_{u, v_0, v_1}, \psi) \right) - \delta ,\]
where $\delta = \left( \frac{v_1}{w_1^\star} - \frac{v_0}{w_0^\star}\right) P_0 - \frac{1}{w_1^\star} E_{1,v_1}  + \frac{1}{w_0^\star} E_{0,v_0} $

\vspace{1em}
\noindent
\textbf{Result for quadratic functions.}
In the multi-dimensional setting, assume that the cost function is a generalized squared interpoint distance;
i.e., for a given positive definite matrix $B$, the cost of moving from feature $x$ to $x'$ is $c(x,x') = (x'-x)^\top B (x' - x)$ if $x' \geq x$ and 0 otherwise.
In this multi-dimensional setting, we can prove the following result.

\begin{theorem} \label{thm:quad_multi}
Suppose we have a linear classifier $f$ such that for an individual with $x$ and group $z$, $f(x)=1$ if and only if $u^\top x \geq v_z$ and 0 otherwise. For an individual from group $z$ with unmanipulated datapoint $x$, the cost to move to point $x' \geq x$ is defined as $c(x,x') = (x'-x)^\top B (x' - x)$ if $x' \geq x$ and 0 otherwise, for symmetric positive-definite $B \in \R^n \times \R^n$.
Then,
\[\textstyle G(f, \psi) \leq -\frac{2\max_{x} (v_z - u^\top x)}{u^\top B^{-1} u}  (v_0 H(f, \psi) - v_0 P_1 + E_{0}), \]
where, $E_{z} := \E[u^\top X{\mid}\psi, Z{=}z, f(X){=}0]P_z$ and $P_z := \P[f(X){=}0{\mid}\psi, Z{=}z]$.
\end{theorem}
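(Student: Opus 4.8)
The plan is to reduce the multi-dimensional quadratic problem to the one-dimensional gradient-bounded setting of Theorem~\ref{thm:main}, in exactly the way the linear case (Theorem~\ref{thm:linear_multi}) was reduced via Theorem~\ref{thm:hu_result}; the only genuinely new ingredient is a closed form for the optimal manipulation cost under a quadratic metric.

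First I would compute the minimal cost paid by a negatively classified individual at $x$ (so $u^\top x < v_z$) to reach the positive region $\set{x' : u^\top x' \geq v_z}$. Writing $\delta = x' - x$ and $r = v_z - u^\top x > 0$, this is the quadratic program $\min_\delta \delta^\top B \delta$ subject to $u^\top \delta \geq r$. Since $B \succ 0$ the constraint is active at the optimum, and a Lagrange-multiplier computation ($2B\delta = \lambda u$) gives $\delta^\star = \frac{r}{u^\top B^{-1}u} B^{-1}u$ with optimal cost $\frac{r^2}{u^\top B^{-1}u} = \frac{(v_z - u^\top x)^2}{u^\top B^{-1}u}$. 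This step is where the outcome-monotonicity constraint $x' \geq x$ must be confronted, and I expect it to be the main obstacle: the unconstrained minimizer $\delta^\star$ is feasible precisely when $B^{-1}u \geq 0$ componentwise (which holds, e.g., for diagonal $B$ and $u \geq 0$). When it fails, the true cost is strictly larger, and since a clean upper bound on $G = G_0 - G_1$ requires an upper bound on the group-$0$ burden, one must either assume $B^{-1}u \geq 0$ so the computed cost equals the true cost, or carefully track the sign of the induced correction terms.

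Next I would reduce to one dimension exactly as in the proof of Theorem~\ref{thm:linear_multi}: treat $\tilde x = u^\top x$ as the scalar feature and $v_z$ as the group threshold, so the induced scalar cost is $d(\tilde x, v_z) = \frac{(v_z - \tilde x)^2}{u^\top B^{-1}u}$ for $\tilde x < v_z$. Its derivative in $\tilde x$ is $\frac{-2(v_z - \tilde x)}{u^\top B^{-1}u}$, which over the support of negatively classified individuals lies in $[g_l, g_u]$ with $g_u = 0$ (attained as $\tilde x \to v_z$) and $g_l = \frac{-2\max_x(v_z - u^\top x)}{u^\top B^{-1}u}$. The factor of $2$ here, produced by differentiating the square, is exactly what yields the leading constant in the stated bound, and this is the structural reason the quadratic bound differs from the linear one of Theorem~\ref{thm:linear_multi}.

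Finally I would invoke the one-dimensional bound of Theorem~\ref{thm:main} (or its group-specific variant Theorem~\ref{thm:main_diff_costs} if the group gradient ranges differ) with thresholds $\tau_z = v_z$ and the gradient bounds above. Substituting $g_u = 0$ annihilates every group-$1$ gradient term and collapses the upper bound to $g_l\,(E_{0,v_0} - v_0 P_0)$. Using the identity $H(f,\psi) = P_1 - P_0$ established within the proof of Theorem~\ref{thm:main}, I would rewrite $E_{0,v_0} - v_0 P_0 = v_0 H(f,\psi) - v_0 P_1 + E_0$, giving $G(f,\psi) \leq -\frac{2\max_x(v_0 - u^\top x)}{u^\top B^{-1}u}\bigl(v_0 H(f,\psi) - v_0 P_1 + E_0\bigr)$, as claimed. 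The only remaining routine work is to verify that the quantities $P_z, E_z$ of this theorem match the $P_z(\tau_z), E_{z,\tau_z}$ appearing in Theorem~\ref{thm:main} under the identification $\tilde x = u^\top x$, $\tau_z = v_z$.
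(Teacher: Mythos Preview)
Your proposal is correct and follows essentially the same route as the paper: Lagrangian optimization to obtain the closed-form quadratic cost $(v_z - u^\top x)^2 / (u^\top B^{-1} u)$, reduction to the one-dimensional scalar $\tilde x = u^\top x$, gradient bounds $g_u = 0$ and $g_l = -2\max_x(v_z - u^\top x)/(u^\top B^{-1}u)$, and then substitution into the one-dimensional upper bound (the paper cites Theorem~\ref{thm:main_diff_costs} rather than Theorem~\ref{thm:main}, but this is immaterial since the cost is group-independent here). Your explicit flagging of the $x' \geq x$ feasibility issue for $\delta^\star \propto B^{-1}u$ is a point the paper's proof silently passes over.
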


\begin{proof}
Let $p_z$ denote $\P[X \mid \psi(X,Y)=1, Z=z]$.
Suppose we have an individual with feature $x$ such that $f(x) = 0$.
Then the ideal update for the individual $x' := \min_{x': f(x')=1} (x-x')^\top B(x-x')$.

Using the Lagrangian method, for a parameter $\lambda > 0$, the Lagrangian function for this optimization program is
$L(\lambda, x') := (x-x')^\top B(x-x') + \lambda (u^\top x' - v).$
Taking the derivative with respect to $x'$ we get
\[\pdv{L}{x'} = 2 B (x-x') + \lambda u.\]
Setting the derivative to zero, we get
\[x' = x + \frac{1}{2} \lambda B^{-1} u,\]
\[\lambda = \frac{2(v_z - u^\top x)}{u^\top B^{-1} u}.\]
Therefore, the strategic cost incurred by this individual is
\[\frac{1}{4} \lambda^2 u^\top B^{-1} u =\frac{1}{4} \left(\frac{2(v_z - u^\top x)}{u^\top B^{-1} u} \right)^2 u^\top B^{-1} u = \frac{(v_z - u^\top x)^2}{u^\top B^{-1} u}. \]
Using this expression, the cost paid to update $x$ is reduced to a single quantity that uses $u^\top x$ for every individual.
Let $w = u^\top x$. Then with $w$ as feature, we can directly use the one-dimensional case.
In particular, gradient of cost function is upper bounded by $g_{u,z}=0$ and lower bounded by $g_{l,z} = -\frac{2\max_{x} (v_z - u^\top x)}{u^\top B^{-1} u}$.
Let $E_{z} := \E[u^\top X \mid \psi, Z =z, f(X) = 0]P_z$ and $P_z := \P[f(X) = 0 \mid \psi, Z =z]$
Therefore using Theorem 3.5, 
\begin{align*}
    G(f, \psi) &\leq g_{u,1} v_1 H(f, \psi) + (g_{u,1}v_1 - g_{l,0} v_0) P_0 - g_{u,1} E_{1}  + g_{l,0} E_{0} \\
    &= - g_{l,0} v_0 P_0  + g_{l,0} E_{0} \\
    &= g_{l,0} v_0 (H(f, \psi) - P_1)   + g_{l,0} E_{0} \\
    &= -\frac{2\max_{x} (v_z - u^\top x)}{u^\top B^{-1} u}  (v_0 H(f, \psi) - v_0 P_1 + E_{0}).
\end{align*}

\end{proof}

\section{Additional empirical details} \label{sec:emp_appendix}

\textbf{FICO dataset.} The dataset was initially built from the analysis presented in the Report to the Congress on Credit Scoring and Its Effects on the Availability and Affordability of Credit by the Federal Reserve \cite{barocas2017fairness}. 
In particular, the dataset contains FICO scores of TransUnion, a US-based credit scoring agency, provided to the Federal Reserve on request.
This dataset \cite{Hardt2016EqualityOO} \footnote{\url{github.com/fairmlbook/fairmlbook.github.io/tree/master/code/creditscore/data}}
contains cumulative density functions for credit scores of African-American and Caucasian individuals as well as the likelihood of \textit{defaulting} on a loan conditional on race and credit score.
Using the cumulative density functions, we sample around 116k credit scores corresponding to white individuals and 16k credit scores corresponding to black individuals to create a dataset.
Using the likelihood of default, we also sample a binary outcome for each individual in the dataset and use it as the class label for the classification task.
The FICO dataset code and builder are available under the MIT License.

\noindent
\textbf{Adult dataset.} The list of features for the Adult dataset are age, class of worker, educational attainment, marital status, occupation, place of birth, usual hours worked per week past 12 months, and gender.
For feature descriptions, we refer the reader to \citet{ding2021retiring}. 
The Adult dataset is available under the MIT License
and was created using the  American Community Survey (ACS).

We use the SLSQP function in the Python scipy package to solve our constrained optimization problems for Adult dataset.
The parameters used for the SLSQP function are: number of iterations $=100$, ftol $=$ 1e-3, eps $=$ 1e-3.
All variables, other than class label and protected attribute, are normalized using the mean and standard deviation of the columns of the training partition.
For the implementation of \cite{rezaei2020fairness} classifier, we use the publicly-available code for their algorithm - \url{https://github.com/arezae4/fair-logloss-classification}. The parameter $C$ in their algorithm is set to be 0.005, as recommended in their code.

\begin{figure*}
    \centering
    \small
    \includegraphics[height=1in]{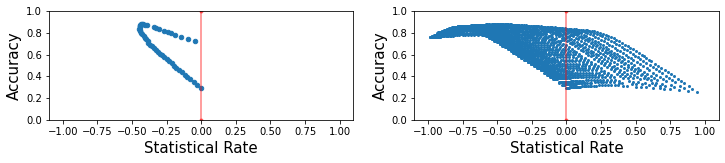}
    \subfloat[$\tau_0 {=}\tau_1 {\in}\set{1, \dots, 100}$]{\hspace{.5\linewidth}}
    \subfloat[$(\tau_0, \tau_1) {\in}\set{1, \dots, 100}^2$]{\hspace{.5\linewidth}}
    \caption{\small Accuracy and statistical rate of all classifiers for the FICO dataset for sub-population condition $\psi_{sr}$.
    Plot (a) presents statistical rate $H(f, \psi_{sr})$ vs accuracy for single-threshold classifiers. 
     Plot (b) present $H(f, \psi_{sr})$ vs accuracy for classifiers that use group-specific thresholds. 
    }
    \label{fig:fico_sr_acc_results}
\end{figure*}

\begin{figure*}
    \centering
    \small
    \includegraphics[height=1in]{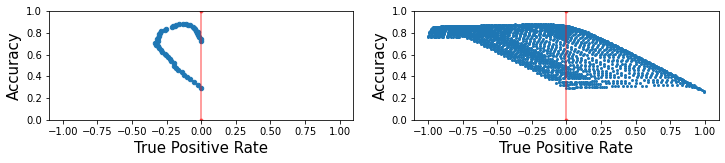}
    \subfloat[$\tau_0 {=}\tau_1 {\in}\set{1, \dots, 100}$]{\hspace{.5\linewidth}}
    \subfloat[$(\tau_0, \tau_1) {\in}\set{1, \dots, 100}^2$]{\hspace{.5\linewidth}}
    \caption{\small Accuracy and true positive rate of all classifiers for the FICO dataset for sub-population condition $\psi_{tpr}$.
    Plot (a) presents true positive rate $H(f, \psi_{tpr})$ vs accuracy for single-threshold classifiers. 
     Plot (b) present $H(f, \psi_{tpr})$ vs accuracy for classifiers that use group-specific thresholds. 
    }
    \label{fig:fico_tpr_acc_results}
\end{figure*}

\begin{figure*}[t]
    \centering
    \includegraphics[width=\linewidth]{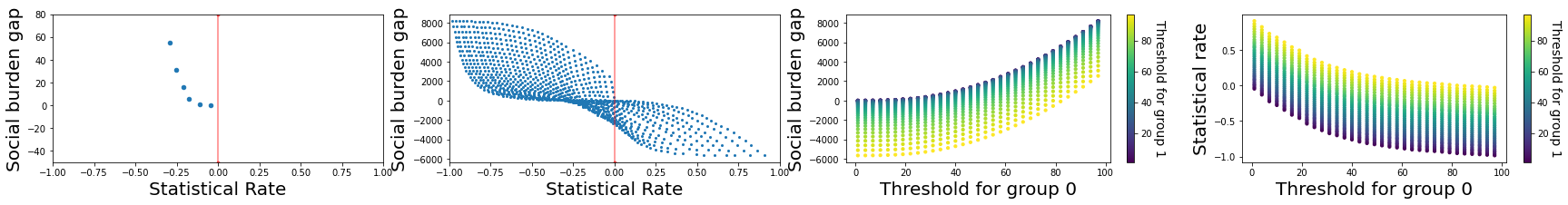}
    \subfloat[$\tau_0{=}\tau_1 \in \set{1, \dots, 100}$]{\hspace{.25\linewidth}}
    \subfloat[$(\tau_0, \tau_1){\in}\set{1, \dots, 100}^2$]{\hspace{.25\linewidth}}
    \subfloat[Social burden gap vs $\tau_0$]{\hspace{.25\linewidth}}
    \subfloat[Statistical rate vs $\tau_0$]{\hspace{.25\linewidth}}
    \caption{\small Statistical rate and social burden gap of all classifiers for the FICO dataset for quadratic cost function. Each point represents a classifier and the axes plot different properties of these classifiers.
    Plot (a) presents social burden gap $G(f, \psi_{sr})$ vs statistical rate $H(f, \psi_{sr})$ for classifiers that use the same threshold for both groups. 
     Plots (b), (c), (d) present social burden gap $G(f, \psi_{sr})$ vs statistical rate $H(f, \psi_{sr})$ for classifiers that can use group-specific threshold. 
    }
    \label{fig:fico_sr_results_quad}
\end{figure*}

\begin{figure*}
    \centering
    \small
    \includegraphics[width=\linewidth]{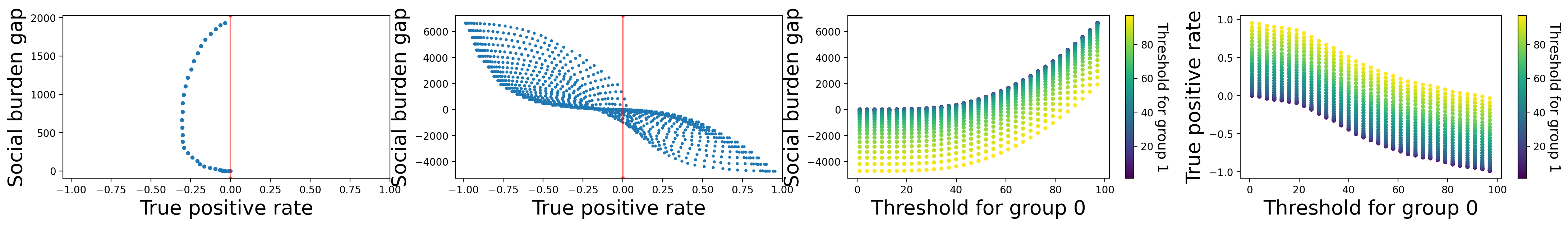}
    \subfloat[$\tau_0{=}\tau_1 \in \set{1, \dots, 100}$]{\hspace{.25\linewidth}}
    \subfloat[$(\tau_0, \tau_1){\in}\set{1, \dots, 100}^2$]{\hspace{.25\linewidth}}
    \subfloat[Social burden gap vs $\tau_0$]{\hspace{.25\linewidth}}
    \subfloat[True positive rate vs $\tau_0$]{\hspace{.25\linewidth}}
    \caption{\small True positive rate and social burden gap of all classifiers for the FICO dataset for quadratic cost function.
    Plot (a) presents social burden gap $G(f, \psi_{tpr})$ vs true positive rate $H(f, \psi_{tpr})$ for single-threshold classifiers.
     Plot (b), (c), (d) present $G(f, \psi_{tpr})$ vs $H(f, \psi_{tpr})$ for classifiers that use group-specific thresholds. 
    }
    \label{fig:fico_tpr_results_quad}
\end{figure*}

\begin{figure*}[t]
    \centering
    \small
    \includegraphics[width=\linewidth]{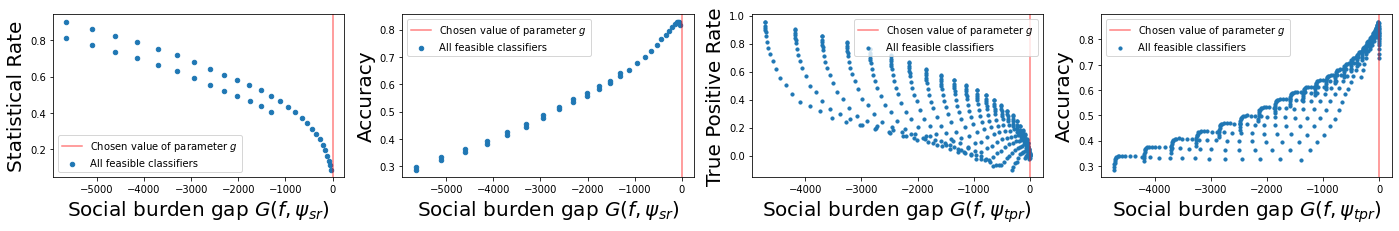}
    \subfloat[Stat. rate vs $G(f, \psi_{sr})$]{\hspace{.25\linewidth}}
    \subfloat[Accuracy vs $G(f, \psi_{sr})$]{\hspace{.25\linewidth}}
    \subfloat[TPR vs $G(f, \psi_{tpr})$]{\hspace{.25\linewidth}}
    \subfloat[Accuracy vs $G(f, \psi_{tpr})$]{\hspace{.25\linewidth}}
    \caption{\small Performance of classifiers that satisfy the modified fairness constraints.
    Plot (a,b) present statistical rate and accuracy vs social burden gap $G(f, \psi_{sr})$ for all classifiers for which condition~(\ref{eq}) is satisfied with $\psi = \psi_{sr}$ for quadratic cost function.
    Plot (c,d) present true positive rate and accuracy vs social burden gap $G(f, \psi_{tpr})$ for all classifiers for which condition~(\ref{eq}) is satisfied with $\psi = \psi_{tpr}$ for quadratic cost function.
    }
    \label{fig:fico_fair_clf_quad}
\end{figure*}

\section{Additional results for FICO} \label{sec:fico_appendix}

\nrev{
\paragraph{Accuracy of all classifiers vs $H(\cdot, \psi_{sr})$ and $H(\cdot, \psi_{tpr})$.}
We first expand the results presented in Section~\ref{sec:fico_main} and present the accuracy of all classifier vs statistical rate and true positive rate for sub-population conditions $\psi_{sr}$ and $\psi_{tpr}$ respectively.
Figure~\ref{fig:fico_sr_acc_results} presents the results for $\psi_{sr}$ when (a) classifier uses same thresholds for both groups and (b) when classifier uses different thresholds for both groups.
As expected, due to feature bias in the dataset, optimal classifier accuracy is higher when $H(\cdot, \psi_{sr})$ is less than 0, i.e., when the majority group individuals are selected a higher rate.
However, Figure~\ref{fig:fico_sr_acc_results}b also shows that there exist classifiers with $H(\cdot, \psi_{sr})$ greater than 0 (i.e., favoring the minority group) for which the loss in accuracy, compared to the classifier with maximum accuracy, is minimal.
Similar observations hold for the setting when sub-population condition is $\psi_{tpr}$; the results for this setting are presented in Figure~\ref{fig:fico_tpr_acc_results}.
}

\paragraph{Results with quadratic strategic cost function.}
We also provide additional results for the setting when cost function is quadratic and separable, i.e.
\[c(x,x') = \begin{cases} (x'^2-x^2),  & \text{ if} x' > x,\\ 0 & \text{otherwise}. \end{cases}\]
The classifiers considered here use group specific thresholds in the set $(\tau_0, \tau_1) \in \set{1, 3, \dots, 100}\times \set{1, 3, \dots, 100}$.
Figure~\ref{fig:fico_sr_results_quad} plots the statistical rate and social burden gap of all classifiers 
and Figure~\ref{fig:fico_tpr_results_quad} plots the true positive rate and social burden gap of all classifiers.
Once again we see that group-specific thresholds can lead to lower social burden gap than single-threshold classifiers.

Figure~\ref{fig:fico_fair_clf_quad} presents the performance of classifiers that constrain the upper bound in Theorem 3.4 to be less than 0.
In this case, there are fewer classifiers that are feasible with respect to the modified constraints (compared to the linear case) since the bounds on cost gradient $g_u$ (-2) and $g_l$ (-100) are relatively looser.

\paragraph{Resources used to run experiments.} All experiments were run on a MacBook system with 1.8 GHz processor and 8GB RAM.

\section{Additional results for Adult} \label{sec:adult_appendix}

\paragraph{Additional results.}

Recall that we defined, $f_{sr}$ as $\\$ $f_{sr} := \arg \min_f \E[L(f; X,Y)]$ subject to $H(f, \psi_{sr}) \geq \epsilon$ for pre-specified $\epsilon \in [-1,1]$.
Here $\epsilon$ denotes the desired statistical rate.
Figure~\ref{fig:adult_stat_rate_plot} plots the accuracy and social burden gap vs statistical rate of classifiers $f_{sr}$ for different $\epsilon$ values in the set $\set{-0.5, -0.4, \dots, 0.4, 0.5}$.
As we can see from the plots, social burden gap and accuracy decrease with increasing statistical rate.
However, for classifiers with statistical rate 0, average social burden gap is still greater than 0.

For a pre-specified $g \in \R$ we define $f_{strat}$ as $\\$
$f_{strat} := \arg \min_f \E[L(f; X,Y)]$ subject to 
$-\frac{1}{w_1^\star} \left(v_1 H(f, \psi) \right) - \delta \leq g$ (quantities $w_1^\star, v_1, \delta$ are defined in Theorem 4.1).
Here $g$ denotes the desired social burden gap.
Figure~\ref{fig:adult_social_burden_plot} plots the accuracy and statistical rate vs social burden gap of classifiers $f_{strat}$ for different $g$ values.

\begin{figure*}
    \centering
    \includegraphics[width=\linewidth]{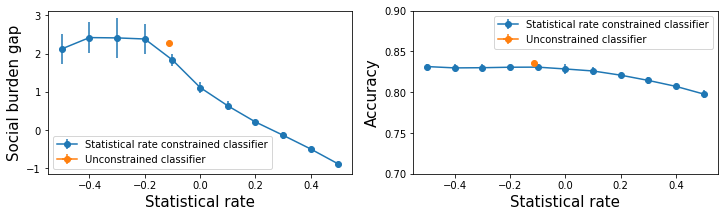}
    \caption{Results for Adult dataset. We plot the performance of unconstrained and fairness constrained classifiers. For the statistical rate constrained classifiers, we vary the desired statistical rate parameter and measure the social burden gap and accuracy of the resulting classifiers.}
    \label{fig:adult_stat_rate_plot}
\end{figure*}

\begin{figure*}
    \centering
    \includegraphics[width=\linewidth]{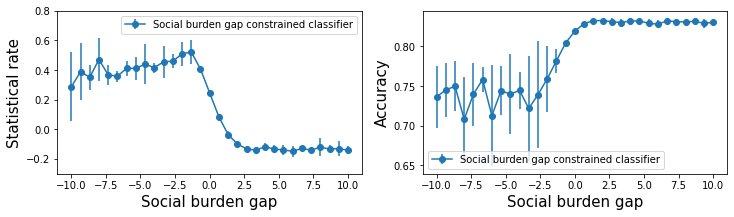}
    \caption{Results for Adult dataset. We plot the performance of unconstrained and fairness constrained classifiers. For the social burden gap constrained classifiers, we vary the desired social burden gap and measure the statistical rate and accuracy of the resulting classifiers.}
    \label{fig:adult_social_burden_plot}
\end{figure*}

\paragraph{Resources used to run experiments.} All experiments were run on a MacBook system with 1.8 GHz processor and 8GB RAM.

\end{document}